\newtheorem{remark}{Remark}
\newtheorem{theorem}{Theorem}
\newtheorem{lemma}{Lemma}
\newtheorem{proposition}{Proposition}
\newtheoremstyle{restate}{}{}{\itshape}{}{\bfseries}{~(restated).}{.5em}{\thmnote{#3}}
\theoremstyle{restate}
\newtheorem*{restate}{}
\newcommand{\cP}{\mathcal{P}}
\newcommand{\cU}{\mathcal{U}}
\newcommand{\cQ}{\mathcal{Q}}
\newcommand{\cD}{\mathcal{D}}
\newcommand{\cA}{\mathcal{A}}
\newcommand{\bF}{\mathbb{F}}
\newcommand{\bZ}{\mathbb{Z}}
\newcommand{\bud}[1]{Batch-Update$^{\cD}_{#1}$}
\newcommand{\buf}{{\it buf}}
\newcommand{\batch}{\textrm{bu}}
\newcommand{\lin}{\textrm{lin-skt}}
\newcommand{\Count}{\texttt{Count}}
\newcommand{\CountM}{\texttt{CountMin}}
\newcommand{\WR}{word~RAM}
\newcommand{\WRMM}{word~RAM$^{\text{MM}}$}
\def \eps {{\varepsilon}}
\def \F {{\mathbb F}}
\def \N {{\mathbb N}}
\def \Z {{\mathbb Z}}
\def \poly {{\mathrm{poly}}}
\begin{document}
\title{Faster Update Time for Turnstile Streaming Algorithms}
\author{Josh Alman\thanks{Harvard University. \texttt{joshuaalman@gmail.com}. Supported in part by NSF CCF-1651838 and NSF CCF-1741615.} \and Huacheng Yu\thanks{Princeton University. \texttt{yuhch123@gmail.com}. Supported in part by ONR grant N00014-17-1-2127, a Simons Investigator Award and NSF Award CCF 1715187.}}
\date{\today}

\setcounter{page}{0}
\maketitle
\thispagestyle{empty}

\begin{abstract}
In this paper, we present a new algorithm for maintaining linear sketches in turnstile streams with faster update time.
As an application, we show that $\log n$ \Count{} sketches or \CountM{} sketches with a constant number of columns (i.e., buckets) can be implicitly maintained in \emph{worst-case} $O(\log^{0.582} n)$ update time using $O(\log n)$ words of space, on a standard word RAM with word-size $w=\Theta(\log n)$.
The exponent $0.582\approx 2\omega/3-1$, where $\omega$ is the current matrix multiplication exponent.
Due to the numerous applications of linear sketches, our algorithm improves the update time for many streaming problems in turnstile streams, in the high success probability setting, without using more space, including $\ell_2$ norm estimation, $\ell_2$ heavy hitters, point query with $\ell_1$ or $\ell_2$ error, etc.
Our algorithm generalizes, with the same update time and space, to maintaining $\log n$ linear sketches, where each sketch
\begin{enumerate}
	\item partitions the coordinates into $k<\log^{o(1)} n$ buckets using a $c$-wise independent hash function for constant $c$,
	\item maintains the sum of coordinates for each bucket.
\end{enumerate}
Moreover, if arbitrary word operations are allowed, the update time can be further improved to $O(\log^{0.187} n)$, where $0.187\approx \omega/2-1$.
Our update algorithm is adaptive, and it circumvents the non-adaptive cell-probe lower bounds for turnstile streaming algorithms by Larsen, Nelson and Nguy{\^{e}}n (STOC'15).

On the other hand, our result also shows that proving unconditional cell-probe lower bound for the update time seems very difficult, even if the space is restricted to be (nearly) the optimum.
If $\omega=2$, the cell-probe update time of our algorithm would be $\log^{o(1)} n$.
Hence, proving any higher lower bound would imply $\omega>2$.

\end{abstract}

\newpage
\section{Introduction}
Linear sketching has numerous applications in streaming algorithms (to list a few~\cite{AMS99,CCF04,Indyk06,Li08,KNW10}), especially for \emph{turnstile} streams~\cite{Mut05}.
In the turnstile streaming model, the data structure wants to maintain a vector $\nu\in \bZ^n$, under updates of form $(u, \Delta)$ for $u\in [n]$ and $\Delta\in\bZ$, which changes $\nu_u$ to $\nu_u+\Delta$.
Each entry $\nu_i$'s is usually assumed to be bounded by $\mathrm{poly}\, n$ at any time.
Occasionally, the data structure must answer queries about $\nu$.
Special cases of turnstile streams include the incremental streams ($\Delta=1$) and singleton insertions and deletions ($\Delta=\pm 1$).
For turnstile streams, a linear-sketching streaming algorithm maintains a vector $A\nu$ in memory, for some random (but carefully sampled) matrix $A\in \bZ^{s\times n}$ and $s\ll n$, such that the (much shorter) vector $A\nu$ provides sufficient information to answer the queries with good probability.
Due to the nature of linear transformations, the updates can be maintained easily.
To handle an update $(u,\Delta)$, the algorithm simply computes $\Delta\cdot Le_u$, and adds it to $A\nu$, where $e_u$ is the $u$-th unit vector.

However, one single instance of linear sketch is often not very reliable, e.g., one basic \Count{} sketch~\cite{AMS99,CCF04} could only provide a constant approximation of $x$'s second moment ($\|x\|_2^2$) with a (small) constant probability.
The standard approach to boost the reliability, say to $1/n^2$ failure probability, is to maintain $O(\log n)$ independent sketches, i.e., independently sample $O(\log n)$ $A$ (or equivalently, sample a taller $A$ with $\log n$ times more rows).
This approach would naturally cause a blowup in the space by a factor of $\log n$, as well as a slowdown in the update time.
It was shown by Jayram and Woodruff~\cite{JW13} that the space blowup is necessary, however, it is unclear if one also needs to pay the $\log n$ factor in the update time.
The update time could sometimes be even \emph{more important} than the space~\cite{LNN15}, since the updates can arrive from the data stream at an extremely high rate (e.g., see~\cite{TZ12}).
A single instance of the linear sketch usually occupies logarithmic or less space, an extra $\log n$ factor is often acceptable.
However, if the update processing speed does not at least match the rate at which the updates in the stream arrive, the whole system might fail.

In linear-sketch based algorithms, the space is proportional to the number of rows in $A$, and the update time is proportional to the \emph{column sparsity}, which is the number of non-zero entries in each column (since we only need to update the non-zero entries in $\Delta \cdot Ae_u$).
Hence, one conceivable and common approach to speed up the update time is to use a more sparse $A$.
Unfortunately, this approach is known to have limitations.
Larsen, Nelson and Nguy{\^{e}}n~\cite{LNN15} proved that $A$ cannot have few rows and be column sparse at the same time for several turnstile streaming problems.
In fact, they showed tradeoffs between update time and space for the more general \emph{non-adaptive} data structures.
That is, the memory locations written or read during each update, can only depend on the updated index $u$ and the random bits (but not the memory contents of the data structure).
Note that all linear-sketch based algorithms are non-adaptive.
Their lower bounds also apply to cell-probe data structures, i.e., the above lower bounds hold even if we only count the number of memory words read or written by the data structure.
For the problems considered in~\cite{LNN15}, all known solutions use non-adaptive update algorithms.
Hence, in order to obtain faster update times, we would require a completely new strategy.


\subsection{Our results}
We propose a generic solution for efficiently maintaining linear sketches using fast matrix multiplication, and apply it to a large class of streaming problems, including the problems discussed in~\cite{LNN15}.
Our new algorithm is adaptive, and hence, it circumvents the previous non-adaptive lower bound.

Our algorithm applies to any linear sketches of the following form:
\begin{enumerate}
	\setlength{\itemsep}{-3pt}
	\item partition all coordinates $[n]$ into $k$ buckets using a $c$-wise independent hash function $h$ for some constant $c$;\footnote{Most streaming algorithms need no more than constant-wise independence.}
	\item maintain the sum $\sum_{u:h(u)=b} \nu_u$ for each bucket $b\in[k]$ (equivalently, so far $A$ has $k$ rows, and each column has exactly one $1$ in a $c$-wise independently chosen row);
	\item repeat $T$ times independently.
\end{enumerate}
Denote such a linear sketch by $\lin(k,c,T)$.
Maintaining a $\lin(k,c,T)$ using the straightforward approach takes $O(kT)$ words of space and has $O(T)$ update time.
When $k$ is small, the corresponding matrix $A$ is dense.
It is worth noting that the celebrated \Count{} sketch~\cite{CCF04} and the \CountM{} sketch~\cite{CM05} both have this form.\footnote{The \Count{} sketch assigns $c$-wise independent $\pm 1$ random weights to each coordinate, and maintains the weighted sum. One may view all $+1$ coordinates forming one bucket and the $-1$ coordinates forming another bucket. Then the query algorithm may manually subtract the sums of the two buckets to obtain the weighted sum.}
These two linear sketches have wide applications to many streaming problems, including $\ell_2$ norm estimation, $\ell_1$ heavy hitters, $\ell_2$ heavy hitters, point query with $\ell_1$ error, point query with $\ell_2$ error, etc.

In this paper, we present a direct improvement on the update time of $\lin(k,c,T)$ for small $k$, without using more space.
In particular, when $k<\log^{o(1)} n$ and $T=\Theta(\log n)$, our algorithm has the following guarantees.

\newcommand{\thmmaincont}{
	For any problem that admits a $\lin(k,c,T)$ linear sketch solution, where $k<\log^{o(1)} n$ is a power of two, and $T=O(\log n)$, there is an algorithm that 
	\begin{itemize}
		\item uses $O(k\log n)$ words of space,
		\item has \emph{worst-case} update time $O(\log^{0.582} n)$, and
		\item additive \emph{extra} query time $O(\log^{1.582} n)$
	\end{itemize}
	on a standard word RAM with word-size $w=\Theta(\log n)$, where the exponent $0.582\approx 2\omega/3-1$, and $\omega$ is the current matrix multiplication exponent.
	Moreover, if we allow arbitrary $w$-bit word operations, the update time and extra query time can be further reduced to $O(\log^{\omega/2-1+o(1)} n)=O(\log^{0.187} n)$ and $O(\log^{\omega/2+o(1)} n)=O(\log^{1.187} n)$ respectively.
}

\newcommand{\thmmaincontMM}{
	For any problem that admits a $\lin(k,c,T)$ linear sketch solution, where $k<\log^{o(1)} n$ is a power of two, and $T=O(\log n)$, there is an algorithm that 
	\begin{itemize}
		\item uses $O(k\log n)$ words of space,
		\item has \emph{worst-case} update time $O(\log^{0.582} n)$, and
		\item additive \emph{extra} query time $O(\log^{1.582} n)$
	\end{itemize}
	on a standard word RAM with word-size $w=\Theta(\log n)$, where the exponent $0.582\approx 2\omega/3-1$, and $\omega$ is the current matrix multiplication exponent.
	Moreover, the algorithm can be implemented on a \WRMM{} with the update time and extra query time $O(\log^{\omega/2-1+o(1)} n)=O(\log^{0.187} n)$ and $O(\log^{\omega/2+o(1)} n)=O(\log^{1.187} n)$ respectively.
}

\begin{theorem}\label{thm_main}
	\thmmaincont
\end{theorem}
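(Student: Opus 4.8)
The plan is to keep the $T$ hash tables fully materialized — the $kT=O(k\log n)$ bucket sums — together with a buffer \buf{} holding the last $B$ updates, where $B$ is a batch size (roughly $\Theta(\log n)$) to be fixed by balancing. An update $(u,\Delta)$ is just appended to \buf{}, and whenever $|\buf|$ reaches $B$ we invoke a routine \bud{B} that reads the $B$ pending updates, computes their aggregate effect on all $kT$ bucket sums, adds it in, and empties \buf{}. A query is answered by first folding the $\le B$ updates still in \buf{} into a scratch copy of the bucket sums — a smaller instance of exactly the computation \bud{B} performs — and then running the given $\lin(k,c,T)$ query algorithm; the cost of this fold-in is the additive \emph{extra} query time. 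The $O(k\log n)$ space accounts for the bucket sums, the buffer, and $\mathrm{polylog}(n)$-size precomputed tables. So the whole statement reduces to (a) implementing \bud{B} in $O(\log^{2\omega/3+o(1)}n)$ time on the word RAM (resp.\ $O(\log^{\omega/2+o(1)}n)$ on the \WRMM) and (b) de-amortizing it.

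The heart of the argument is (a). Naively, $B$ updates touch $TB=\Theta(\log^2 n)$ cells, so the routine must avoid ever writing down all $TB$ individual hash values $h_t(u^{(i)})$. The enabling structure is that each $h_t$, being $c$-wise independent with $c=O(1)$, is a \emph{low-degree polynomial} in the base-$q$ ``digits'' of $u$ for a well-chosen $q$; here I would use that $k$ is a power of two to make the final reduction into $[k]$ well behaved by working over a field of characteristic two (or one whose size is a power of two), and pick $q$ so that the number of monomials that occur is only $\log^{o(1)}n$ (e.g.\ $q\approx 2^{\log n/\log\log n}$, giving $O(\log\log n)$ digits and $\binom{O(\log\log n)}{O(1)}=\log^{o(1)}n$ monomials). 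Then (i) pack the digit-monomial data of the $B$ buffered coordinates into a matrix, (ii) pack the structurally identical coefficient data of the $T$ hash functions into another, and (iii) recover all $kT$ bucket sums \emph{directly} from a constant number of rectangular matrix products, so that the $\Theta(\log^2 n)$ individual hash values are never formed. Crucially the ``which bucket'' part of the computation involves only $O(\log\log n)$-bit quantities and so packs $\Theta(\log n/\log\log n)$ to a word; combined with a block decomposition of the dominant product using $\mathrm{polylog}(n)$-size precomputed tables on the word RAM (resp.\ unit-cost $O(1)$-word operations on the \WRMM), fast matrix multiplication then realizes the product at cost $\log^{2\omega/3+o(1)}n$ (resp.\ $\log^{\omega/2+o(1)}n$) rather than the naive $\log^{\omega}n$. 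Dividing by $B$ gives the update bounds; running the products once at query time gives the extra-query bounds (a factor $O(\log n)$ larger than the update bounds, as stated).

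For (b) I would use the standard de-amortization: split each \bud{B} call into $\Theta(B)$ equal-cost pieces and run one piece on each of the next $B$ updates, maintaining two alternating buffers so that a batch being flushed is never modified, and letting a query consult both buffers. This adds only constant factors to the time and a constant factor to the space, turning the amortized guarantee of (a) into the claimed worst-case update time while preserving the $O(k\log n)$ space bound.

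The hard part will be step (a): arranging the representation of coordinates so that (i) $c$-wise independence is preserved while the monomial count stays $\log^{o(1)}n$, (ii) the reduction into the $k$ buckets is ``matrix-multiplication friendly'' — this is where $k$ being a power of two is used — (iii) one computes the bucket sums directly rather than the $\Theta(\log^2 n)$ individual hash values, cleanly separating the tiny $O(\log\log n)$-bit bucket-index arithmetic (which can be packed) from the $O(\log n)$-bit value arithmetic, and (iv) all base-$q$ field arithmetic is itself implementable within the target time on a plain word RAM. Extracting the exponent $2\omega/3$ on the word RAM (and $\omega/2$ on the \WRMM) from the block decomposition, rather than the naive $\omega$, while using only $\mathrm{polylog}(n)$ extra space, is the other place where real care is needed.
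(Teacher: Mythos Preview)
Your high-level scaffolding matches the paper: buffer $B=\Theta(\log n)$ updates, batch-process them, de-amortize via two alternating buffers (this is Proposition~\ref{prop_batch}), and reduce the batch step to small matrix multiplication. Where you diverge is in the content of step~(a), and there the proposal has a real gap.

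You frame the obstacle as ``the $\Theta(\log^2 n)$ individual hash values are never formed,'' and then reach for a base-$q$ digit expansion with $\log^{o(1)} n$ monomials so that bucket sums can be read off ``directly'' from a constant number of products. But the hash values are \emph{bits} (more generally $\log k=o(\log\log n)$ bits each), so all $TB$ of them occupy only $O(\log n)$ words; forming them is not the bottleneck and the paper does form them. The paper's actual move is to \emph{choose} the hash family so that batch evaluation is literally an $\F_2$ matrix product: take $h_s(u)=\langle s,g(u)\rangle$ over $\F_2$ with $g(u)=(1,u,u^2,\ldots,u^{c-1})$ computed in $\F_{2^w}$ (Vandermonde over the big field guarantees $c$-wise $\F_2$-linear independence, hence $c$-wise independence of the bits). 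Stacking the $w$ seeds as rows and the $w$ vectors $g(u_i)$ as columns, one $w\times cw\times w$ product over $\F_2$ yields the entire $w\times w$ table of hash bits. Your ``polynomial in base-$q$ digits with $\log^{o(1)} n$ monomials'' construction is not shown to give $c$-wise independence with range $[k]$, does not obviously reduce to $\F_2$ matrix multiplication, and the claim that one can bypass the hash values and get the bucket sums in one shot runs into the basic nonlinearity that the contribution of $(u_i,\Delta_i)$ to bucket $b$ is $[h_t(u_i)=b]\cdot\Delta_i$, mixing $\F_2$ (or $\F_k$) arithmetic on the indicator with integer arithmetic on $\Delta_i$. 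The paper handles this by a \emph{second}, separate step (Lemma~\ref{lem_mvprod}): once the $kw\times w$ binary indicator matrix is in hand, multiply it by the integer vector $(\Delta_1,\ldots,\Delta_w)$ by writing each $\Delta_i$ in binary and doing a $w\times w$ product over a prime field $\F_q$ with $q>w$, then reassembling.

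The exponent $2\omega/3$ on the \WR{} also does not come from ``$\mathrm{polylog}(n)$-size precomputed tables'' as you suggest; it comes from a packing trick (Lemma~\ref{lem:smallmm}) that encodes many short vectors into two words so that a single integer multiplication produces all their pairwise inner products, giving a $w^{1/3}\times w^{1/3}\times w^{1/3}$ base case in $\tilde O(1)$ time, on top of which the usual rank-based recursion (Theorem~\ref{mmalg}) runs for $w^{2/3}$ more levels. On the \WRMM{} the base case is $w^{1/2}\times w^{1/2}\times w^{1/2}$, yielding $\omega/2$. Your write-up gestures at ``block decomposition'' but does not identify either the specific base case or the two-stage (hash-then-aggregate) decomposition that makes the whole thing go through.
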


As an application, our theorem implies that
\begin{itemize}
 	\item one can maintain a constant approximation of the $\ell_2$ norm of $\nu$ with probability $1-1/n^{O(1)}$ using $O(\log n)$ words of space and worst-case update time $O(\log^{0.582} n)$;
 	\item one can construct a data structure for nonnegative $\nu$ using $O(\log n)$ words of space and worst-case update time $O(\log^{0.582} n)$, supporting point queries with $n^{-O(1)}$ failure probability and $\ell_1$-error guarantees, i.e., given an index $i\in[n]$, the data structure returns $\nu_i\pm 0.1 \|\nu\|_1$ with probability $1-1/n^{O(1)}$.
\end{itemize}
Assuming one can do arbitrary word operations, the update times are further reduced to $O(\log^{0.187} n)$.
It is worth noting that the lower bound by Larsen, Nelson and Nguy{\^{e}}n asserts that any nonadaptive \emph{cell-probe} data structure (even if we only count the number of memory accesses at the update time) for the above two problems must have update time at least $\tilde{\Omega}(\sqrt{\log n})$ if one uses $\mathrm{poly}\log n$ space.
Hence, our new streaming algorithm ``breaks'' their lower bound by using adaptivity in the update algorithm.

It turns out the only non-standard $w$-bit word operation needed to get $O(\log^{0.187} n)$ update time is to multiply two $w^{1/2}$ by $w^{1/2}$ 0-1 matrices.
We refer to a word RAM equipped with this matrix multiplication operation as a \WRMM{} (see Section~\ref{sec_wrmm}).

\bigskip

Our new algorithm improves the update time by implicitly maintaining the linear sketch.
Hence, in order to answer a query, one would need to first preprocess the data structure to recover the sketches.
It may cause a slower query time by having a preprocessing stage with $O(\log^{1.582} n)$ time.
Some applications may have a very slow query time to begin with (e.g., $\ell_1$ heavy hitters~\cite{CM05}), in which case, the additive $O(\log^{1.582} n)$ time is insignificant.
However, for applications with $O(\log n)$ query time (e.g., point query with $\ell_1$ error~\cite{CM05}), the preprocessing time becomes the bottleneck.

\begin{figure}[t]
\centering
\includegraphics[width=12cm]{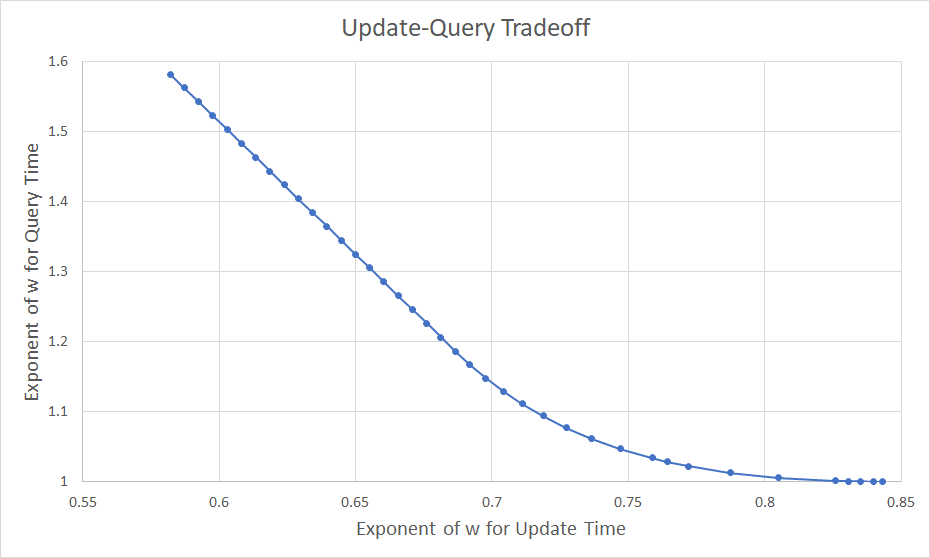}
\caption{Trade-off between the exponent of the query time and update time in our algorithm in the word RAM model, from combining Theorems~\ref{thm_main} and \ref{thm:mm} below with the best known algorithm for rectangular matrix multiplication~\cite{legallrect}.}
\label{fig_tradeoff}
\end{figure}

To allow for a faster query time for those problems, we further provide a smooth tradeoff between the update time and the extra query time (see Figure~~\ref{fig_tradeoff}).
In particular, the other end of this tradeoff curve is an algorithm with $\log^{0.844} n$ update time and $\log^{1+o(1)} n$ extra query time.
To the best of our knowledge, in every application, the query algorithm needs to spend at least a constant time on each of the $O(\log n)$ sketches, the query time was already at least $\Omega(\log n)$.
This tradeoff point almost does not slow down the query time, while it still improves the update time non-trivially.
\newcommand{\thmlowqcont}{
	For any problem that admits a $\lin(k,c,T)$ linear sketch solution, where $k<\log^{o(1)} n$ is a power of two, and $T=O(\log n)$, there is an algorithm that 
	\begin{itemize}
		\item uses $O(k\log n)$ words of space,
		\item has \emph{worst-case} update time $O(\log^{0.844} n)$, and
		\item additive \emph{extra} query time $O(\log^{1+o(1)} n)$
	\end{itemize}
	on a standard word RAM with word-size $w=\Theta(\log n)$, where the exponent $0.844\approx 1-\alpha/2$, and $\alpha$ is the current dual matrix multiplication exponent.
}
\begin{theorem}\label{thm_lowq}
	\thmlowqcont
\end{theorem}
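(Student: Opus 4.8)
The plan is to reuse the implicit-maintenance framework behind Theorem~\ref{thm_main}, re-instantiated at the query-optimal end of its parameter range. Recall the shape of that framework: the $T=\Theta(\log n)$ sketches of a $\lin(k,c,T)$ solution are never stored explicitly; updates are buffered, and after every block of $B$ updates a ``flush'' incorporates the buffered increments into the $O(kT)=\log^{1+o(1)}n$ stored words by a single rectangular matrix multiplication — the $(Tk)\times B$ Boolean ``routing'' matrix $X$ with $X[(j,\beta),i]=[h_j(u_i)=\beta]$, bit-sliced against the $B$ buffered values, folded together with the batched evaluation of the $c$-wise independent hash functions, are arranged into one rectangular product whose shape is a free parameter. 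A larger inner dimension makes flushes cheaper per update but the stored state more ``compressed'', hence the query-time reconstruction more expensive; Theorem~\ref{thm_main} sits at the update-optimal end of this trade-off. For Theorem~\ref{thm_lowq} I would instead fix the shape so that the flush is a multiplication of an $N\times N^{\alpha}$ matrix by an $N^{\alpha}\times N$ matrix with $N=\log^{1/2+o(1)}n$. By definition of the dual exponent $\alpha$ such a product costs $N^{2+o(1)}=\log^{1+o(1)}n$, so the state ends up stored essentially explicitly and answering a query costs only $O(\log^{1+o(1)}n)$ — matching the trivial bound of reading the $O(k\log n)$ stored words plus whatever the underlying $\lin(k,c,T)$ query algorithm needs.

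With that shape fixed, the batch size is the short dimension $B=N^{\alpha}=\log^{\alpha/2+o(1)}n$, so the amortized cost of a flush is $\log^{1+o(1)}n/B=\log^{1-\alpha/2+o(1)}n=\log^{0.844}n$ per update. To upgrade this to a \emph{worst-case} bound I would deamortize in the usual way: spread the work of the forthcoming flush evenly over the $B$ updates that follow it, so that at every moment the committed state is accurate up to at most $O(B)=\log^{o(1)}n$ pending buffered updates. Since $B$ is subpolynomial, a query can simply force an immediate flush of those $O(B)$ pending updates at cost $O(\log^{1+o(1)}n)$, which does not change the claimed extra query time. The space remains $O(k\log n)$: the buffer holds only $\log^{o(1)}n$ entries, and the rectangular multiplication is carried out on inputs of total size $N\cdot N^{\alpha}=\log^{1+o(1)}n$ using $O(\log^{1+o(1)}n)$ scratch words. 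Finally, carrying an arbitrary admissible rectangular-MM exponent $\omega(r)$ through the same calculation, rather than the dual point $\omega(\alpha)=2$, recovers the whole trade-off curve of Figure~\ref{fig_tradeoff}.

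The step I expect to be the crux is the same one that powers Theorem~\ref{thm_main}: showing that the batched hash-evaluation-plus-routing for a block of $B$ updates genuinely reduces to a single rectangular matrix multiplication of the prescribed $N\times N^{\alpha}$-by-$N^{\alpha}\times N$ shape over a small enough ring, \emph{black-box} in the $\lin(k,c,T)$ sketch. Concretely, one must (i) pack the $\Theta(\log n)$-bit increments and the $[k]$-valued hash outputs (using $\log k=o(\log\log n)$) so that the contracted dimension, the output dimension $Tk$, and the batch dimension each land at the intended power of $\log n$; and (ii) verify that evaluating a constant-degree hash polynomial simultaneously at $\Theta(\log n)$ seeds and $B$ points can be absorbed into the \emph{same} product, rather than costing $\Theta(\log n)$ per update on its own — this is exactly where the Vandermonde/word-packing structure exploited in Theorem~\ref{thm_main} is needed, and where a naive accounting would collapse back to $O(\log n)$ update time. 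Once that reduction is in hand, the rest is bookkeeping: plugging in the dual regime, balancing the buffer size, and deamortizing.
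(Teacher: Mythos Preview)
Your plan is essentially the paper's own: shrink the buffer so that one flush costs $\log^{1+o(1)}n$, then invoke Proposition~\ref{prop_batch}. The paper phrases it as follows: with $w=\log n$, let $\theta$ be the largest exponent for which $w\times w^{\theta}\times w$ matrix multiplication over a $\poly(w)$-sized field runs in $w^{1+o(1)}$ time on a word RAM; set $B=w^{\theta}$, and then the batch update (hash evaluation via Lemma~\ref{lem_hash} plus the matrix--vector product via Lemma~\ref{lem_mvprod}) is exactly such a $w\times w^{\theta}\times w$ product, so the update time is $w^{1-\theta+o(1)}$. Theorem~\ref{thm:mm} (the $p\le 1/2$ branch) gives $\theta=\alpha/2$, whence $0.844\approx 1-\alpha/2$. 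Your deamortization and query handling match Proposition~\ref{prop_batch} as well.

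One imprecision worth fixing: the rectangular product you actually need is \emph{not} $N\times N^{\alpha}$ by $N^{\alpha}\times N$ with $N=\log^{1/2}n$. The routing matrix has $Tk=\log^{1+o(1)}n$ rows (not $\sqrt{\log n}$), and after bit-slicing the $w$-bit increments the product is $w\times B\times w$ over a small field, i.e.\ $\log n\times \log^{\alpha/2}n\times \log n$. Your $N\times N^{\alpha}\times N$ picture is really the \emph{recursion structure} Theorem~\ref{thm:mm} uses to carry out that product: a $w^{1/2}\times 1\times w^{1/2}$ base case packed into a single word (Lemma~\ref{lem:smallmm}), followed by a fast $n\times n^{\alpha}\times n$ algorithm at $n=w^{1/2}$. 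The inner dimension $N^{\alpha}=w^{\alpha/2}$ happens to coincide, which is why your arithmetic lands on the right exponents, but the matrix being multiplied has outer dimensions $w$, not $\sqrt{w}$; saying ``the shape is a free parameter'' and then setting the row count to $\sqrt{\log n}$ would not be consistent with the $(Tk)\times B$ routing matrix you yourself wrote down.
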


To obtain the faster update time, our new algorithms use the following simple observation that connects the \emph{worst-case} update time to the running time of a computation problem.
\begin{proposition}[informal]\label{prop_batch}
	For any streaming algorithm $\cA$ that uses $O(S)$ words of space, if one can perform $O(S)$ updates in $O(t_{\batch})$ time and $O(S)$ words of (extra) space, then $\cA$ can be simulated using $O(S)$ words and $O(t_{\batch}/S)$ \emph{worst-case} update time.
\end{proposition}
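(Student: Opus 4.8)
The plan is to apply a standard \emph{double-buffering deamortization} on top of the batch-update primitive from the hypothesis. Let $D$ denote the (at most $O(S)$-word) internal state of $\cA$, and let $\textsf{BU}$ be the procedure guaranteed by the hypothesis: given such a state and a chronologically ordered list of $O(S)$ updates, it outputs, in $O(t_\batch)$ time and $O(S)$ extra words, the state $\cA$ would reach after applying those updates in order. We maintain (i) a \emph{committed} state $D_0$, maintained so as to always equal $\cA$'s state on some chronological prefix of the stream; (ii) two arrays $B_1,B_2$ of pending updates, each holding at most $S$ updates in arrival order; and (iii) $O(S)$ words of scratch. This is $O(S)$ words total.

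\textbf{Updates.} Each arriving update is appended to the active buffer $B_2$ in $O(1)$ time. Whenever $B_1$ is nonempty, we also advance, by $O(t_\batch/S)$ steps of its computation per arrival, a background run of $\textsf{BU}(D_0,B_1)$ that reads $D_0,B_1$ and writes its output only into scratch, touching neither $D_0$ nor $B_1,B_2$. Since this run takes $O(t_\batch)$ steps and $|B_1|\le S$, it completes within $S$ further arrivals --- before $B_2$ can overflow. When it finishes we \emph{commit}: the freshly computed state becomes the new $D_0$ (and the old $D_0$'s region becomes fresh scratch), $B_1$ is discarded, $B_2$ is renamed $B_1$, and a new empty $B_2$ is started. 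As the commit absorbs exactly the next block of updates into $D_0$ via $\textsf{BU}$, the invariant ``$D_0$ equals $\cA$'s state on a prefix'' is preserved by induction over commits. Keeping only a constant number of $O(S)$-word regions and never copying them wholesale, the commit is pure pointer bookkeeping, hence $O(1)$; so each update costs $O(1)+O(t_\batch/S)=O(t_\batch/S)$ in the worst case (using $t_\batch=\Omega(S)$, since $\textsf{BU}$ must at least read its $\Omega(S)$ updates), and $B_1\cup B_2$ never holds more than $2S$ pending updates.

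\textbf{Queries.} At a query, $D_0$ reflects a prefix of the stream while $B_1$ followed by $B_2$ lists the remaining $\le 2S=O(S)$ updates in chronological order. We run $\textsf{BU}\big(D_0,\, B_1 \text{ then } B_2\big)$ once in fresh scratch, on the same random bits as $\cA$ --- $O(t_\batch)$ time, $O(S)$ extra words, leaving $D_0,B_1,B_2$ and the in-progress background run untouched --- obtaining exactly $\cA$'s state after the entire stream so far, from which we compute the answer. Hence the simulation is faithful, at an additive cost of $O(t_\batch)$ extra query time.

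\textbf{Main obstacle.} Everything above is routine; the one step I expect to require care --- and the reason one gets \emph{worst-case}, not amortized, update time --- is the pacing argument: the incremental $\textsf{BU}(D_0,B_1)$ must always finish before the active buffer overflows (ensured by running it at $\Theta(t_\batch/S)$ steps per update over a round of at most $S$ updates), and the commit must cost $O(1)$ rather than $O(S)$ (ensured by storing states and buffers behind pointers, so a commit is a pointer swap; if the batch-update is only available in-place, one instead keeps a shadow copy of $D_0$, refreshed by an incremental copy costing $O(S)$ per round and hence $O(1)$ per update, which does not change the asymptotics).
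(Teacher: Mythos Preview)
Your proposal is correct and follows essentially the same double-buffering deamortization as the paper: maintain two buffers, append to the active one, and incrementally simulate the batch-update algorithm on the inactive one at a rate of $O(t_{\batch}/S)$ steps per arrival so it finishes before the active buffer overflows. Your treatment is slightly more explicit about pointer-swapping for an $O(1)$ commit, and your query handling (running a fresh \textsf{BU} in scratch without disturbing the background run) differs cosmetically from the paper's choice to finish the in-progress flush and then flush the second buffer, but both yield the same $O(t_{\batch})$ additive query cost.
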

As a byproduct, we obtain a ``hardness result for proving hardness.''
The above proposition implies that proving any super constant update time lower bound, even assuming that the data structure uses optimal space up to a constant factor, would result in a super linear time lower bound for linear space algorithms solving some computation problem in the RAM model, which seems beyond the scope of our current techniques.
In particular, Theorem~\ref{thm_main} implies that proving any $\log^{\epsilon} n$ lower bound for any problem that admits a $\lin(O(1),O(1),\log n)$ solution would imply a non-trivial lower bound for matrix multiplication.
We note that although all computational problems admit linear time algorithm in the \emph{cell-probe} model, Proposition~\ref{prop_batch} \emph{does not} imply a streaming algorithm with faster update time in the \emph{cell-probe} model.
We also note that recently, Dvir, Golovnev and Weinstein~\cite{DGW19} proved a hardness result for proving space-\emph{query time} tradeoff lower bounds, but with less ``severe'' consequences.
See more discussions in the next subsection.

\subsection{Our technique}\label{subsec:ourtechnique}
The key steps that lead to the improved update time are
\begin{itemize}
	\item observing that handling a batch of $B$ updates in $o(B \cdot T)$ time implies an $o(T)$ \emph{worst-case} update time algorithm;
	\item designing a faster batch update algorithm for linear sketches.
\end{itemize}
As a running example, let us focus on $\lin(2,2,\log n)$, i.e., we partition $[n]$ into two buckets using pairwise hash families, maintain the sum of each bucket, and repeat $T=\log n$ times independently.
Note that the straightforward implementation takes $O(\log n)$ space and update time.

Handling $B$ updates in $o(BT)=o(B\log n)$ time naturally implies an algorithm with $o(\log n)$ \emph{amortized} update time: temporarily store each update in a buffer of size $B$, and handle the updates all at once for every $B$ updates. 
By applying a trick similar to \emph{global rebuilding} of Overmars~\cite{Over83}, we show that a faster batch update algorithm also leads to lower \emph{worst-case} update time.
To this end, we store two buffers each of size $B$, one of which is the \emph{active} buffer.
We assume that the first buffer is active in the beginning.
The new updates are always appended to the active buffer.
Then after $B$ updates, the active buffer becomes full.
Now we switch the second buffer to be active, and in the next $B$ updates, we fill up the second buffer while \emph{gradually flushing the first buffer}.
That is, if there is an $O(B t_u)$ time algorithm handling all $B$ updates, we are going to simulate it for $O(t_u)$ steps in each of the next $B$ updates.
Therefore, when the second buffer becomes full again, we will have already emptied the first buffer, and now we can switch the first buffer back to active and start to flush the second buffer.
Each update takes $O(t_u)$ time in worst case.
We present the details in Section~\ref{sec_reduction}.
Note that this reduction \emph{does not} work for cell-probe algorithms, because it requires us to simulate the algorithm for a small number of steps in each update.
Such simulation is only possible on a RAM for a RAM algorithm, but not possible in the cell-probe model for a cell-probe algorithm.

\bigskip

For $\lin(2,2,\log n)$, the space usage is $O(\log n)$ words.
Hence, we can afford to set $B=\log n$ without using more space (except for a constant factor). 
We then show that $\log n$ updates can be applied all together in $o(\log^2 n)$ time.
Each update $(u, \Delta)$ adds the $u$-th column of $A$ multiplied by $\Delta$, to $A\nu$.
Thus, $\log n$ updates add the sum of $\log n$ columns multiplied by (possibly different) numbers.
To compute this sum, it suffices to multiply a $\log n$ by $\log n$ 0-1 matrix, where each column is a column in $A$, by a $\log n$-dimensional vector with $\log n$-bit entries, where each entry is a $\Delta$.

The task is boiled down to the following: a) given $\log n$ indices $u\in [n]$, compute for each $u$, the $u$-th column in $A$, which reduces to evaluating $\log n$ pairwise independent hash functions for each $u$; b) compute the above matrix-vector product.
We show that for a carefully chosen pairwise hash family, both parts can be reduced to $\log n\times \log n\times \log n$ 0-1 matrix multiplications.

Finally, we show that this matrix multiplication can be computed in $\log^{2\omega/3} n$ time on a word RAM with word-size $w=\log n$, and $\log^{\omega/2} n$ time on a \WRMM{}, where $\omega<2.373$ is the current matrix multiplication exponent, using $O(\log n)$ words of space.\footnote{Note that the input and output sizes are both $O(\log n)$ words.} We do this by combining the usual recursive approach to designing fast matrix multiplication algorithms, together with larger-than-usual base cases which can be multiplied in constant time using word operations. In the \WR~model, this involves packing many short vectors into two words so that, when the words are multiplied as integers, the pairwise inner products between those vectors can be read off from the result.

\bigskip

The extra query time of the above algorithm is $O(\log^{2\omega/3} n)=O(\log^{1.582} n)$ time, since we will have to complete the buffer-flushing algorithm, before the actual query algorithm can be launched.
To reduce the extra query time, we reduce the buffer size, so that flushing the buffer takes only $\log^{1+o(1)} n$ time.
The above arguments still apply, but now, the problem is reduced to rectangular matrix multiplication.
The buffer size is reduced sufficiently so that the three dimensions in the matrix multiplication problem are very imbalanced.
Finally, we present an algorithm that multiplies $w\times w^{\theta}\times w$ 0-1 matrices in $w^{1+o(1)}$ time for some constant $\theta>0$, which is almost linear in the output size.

\subsection{Organization}
In Section~\ref{sec_reduction}, we present the connection between computing batch update efficiently and faster worst-case update time.
In Section~\ref{sec_upper}, we present how to do the batch update using matrix multiplication.
In Section~\ref{sec_mm}, we present the matrix multiplication algorithms for small matrices.

\section{Preliminaries}\label{sec_prelim}

\subsection{Notation}

Throughout this paper, we write $\tilde{O}$ to hide $\poly \log \log(n)$ multiplicative factors, so $\tilde{O}(T) = T \cdot \poly \log \log n$ and $\tilde{O}(1) = \poly\log\log n$. When $q$ is a power of a prime, we write $\F_q$ for the finite field of order $q$. For $n \in \N$, we write $[n] := \{1,2,\ldots,n\}$.

\subsection{Model of Computation}

We focus in this paper on the word RAM model of computation~\cite{fredman1990blasting}. In the model for word size $w$ (typically we pick $w = \Theta(\log n)$ where $n$ is the input size), the algorithm has random access to words of memory, each of which stores $w$ bits. An algorithm is allowed to perform any ``standard'' word operations, which only take as input a constant number of words, in constant time. Of course, the efficiency of an algorithm can vary depending on what word operations are considered standard; here we consider two options.

\subsubsection{Standard \WR}

In the first option, which we will call the \WR~model, we only allow for the following ``simple'' word operations: $+,-,\times,/$, bit-wise AND, OR, XOR, negation, and bit-shifts.\footnote{In fact, multiplication can be replaced with just bit-shifts, and multiplication can still be performed in $\tilde{O}(1)$ time~\cite{brodnik1997trans}.} This is the most basic definition of the \WR~model used in the literature, and as these operations are so standard, that algorithms designed in this model should be implementable in any word RAM architecture. Nonetheless, many more complicated operations are known to require only $\tilde{O}(1)$ time as well, by combining the simple operations in clever ways. We will make use of the following operations from past work:

\begin{proposition}[{\cite[Proposition~1]{brodnik1997trans}}]\label{prop:permute}
For any fixed permutation $\pi$ on $m$ symbols, given a bit vector $x[1]x[2]\cdots x[m]$, we can compute the permutation $x[\pi(1)]x[\pi(2)]\cdots x[\pi(m)]$ in time $\tilde{O}(\lceil m/w \rceil)$.
\end{proposition}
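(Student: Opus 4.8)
The plan is to realise $\pi$ by an explicit rearrangeable routing network and to push the input bit vector through it one level at a time, performing every step with word-parallel shift-and-mask operations. First I would pad $m$ up to the next power of two (at most doubling it) and fix the Beneš network on that many inputs; it has $2\lceil\log_2 m\rceil-1=\Theta(\log m)$ levels, where level $i$ is a perfect matching of conditional $2\times 2$ switches, each switch joining two positions whose indices differ exactly in bit $i$, and it can route \emph{every} permutation. Since $\pi$ is fixed — it is not part of the per-call input — I would precompute once, in $\poly(m)$ time, the setting (``straight'' or ``cross'') of every switch by the classical looping/edge-colouring routine built on Hall's theorem for the bipartite multigraph of routing demands; from these settings I would store, for each level $i$, one $m$-bit mask $M_i$ (packed into $\lceil m/w\rceil$ words) marking the lower endpoint of each pair that is actually crossed at level $i$, together with the precomputed complementary mask of the positions left untouched at level $i$. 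This preprocessing is charged to ``$\pi$ is fixed'' and needs only $O(\lceil m/w\rceil)$ words of space.

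At runtime, applying level $i$ to the current $m$-bit vector $x$ (held in $\lceil m/w\rceil$ words) is a constant number of bitwise operations and $2^i$-place shifts: copy through the bits at the positions untouched by level $i$ (using the complementary mask), left-shift $x\wedge M_i$ by $2^i$ to carry the marked lower bits up, right-shift $x$ by $2^i$ and mask by $M_i$ to carry the corresponding upper bits down, and then OR the three pieces. A $2^i$-place shift of an $m$-bit vector costs $O(\lceil m/w\rceil)$ simple operations whether or not $2^i$ is a multiple of $w$, so each level costs $O(\lceil m/w\rceil)$ and the whole permutation costs $O(\lceil m/w\rceil\log m)$ simple operations. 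In every regime that occurs in this paper one has $m=\polylog n$ and $w=\Theta(\log n)$, hence $\log m=O(\log\log n)=\poly\log\log n$, and the bound is $\tilde{O}(\lceil m/w\rceil)$ as claimed. (For unrestricted $m$ one can still obtain $\tilde{O}(\lceil m/w\rceil)$ by instead viewing the bits as an $\lceil m/w\rceil\times w$ grid and using the decomposition ``permute-within-columns $\circ$ permute-within-rows $\circ$ permute-within-columns'': the row stage is $\lceil m/w\rceil$ independent within-word permutations, each done in $O(\log w)$ operations, and the two column stages are handled by transposing the grid into row layout and recursing, a recursion of depth $O(\log_w m)$; we will not need this.)

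The only points I expect to require care are bookkeeping ones. First, one must verify that a single Beneš level really is $O(\lceil m/w\rceil)$ simple operations when $2^i$ straddles word boundaries — this is just the ``shift a long bit string by an arbitrary amount'' primitive, a standard combination of whole-word moves and two masked intra-word shifts per output word. Second, one should state explicitly that computing the switch settings for the given $\pi$ is legitimate precomputation and is not counted against the per-call running time, which is precisely what the word ``fixed'' in the statement licenses. There is no genuine combinatorial difficulty here: the rearrangeability of the Beneš network and the routing algorithm that produces its switch settings are classical, and everything at call time is packing, shifting, and masking.
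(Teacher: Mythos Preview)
The paper does not prove this proposition at all: it is quoted verbatim as \cite[Proposition~1]{brodnik1997trans} and used as a black box, so there is no ``paper's own proof'' to compare against. Your Bene\v{s}-network argument is correct and is in fact essentially the construction used in the cited reference; the one thing worth flagging is that your bound $O(\lceil m/w\rceil\log m)$ is only $\tilde{O}(\lceil m/w\rceil)$ under this paper's convention that $\tilde{O}$ hides $\poly\log\log n$ factors \emph{and} the regime $m=\polylog n$ --- you noticed this yourself, but it means the statement as written is slightly stronger than what your main argument gives, and the parenthetical column/row/column decomposition you sketched (which is closer to what Brodnik et al.\ actually do to avoid the extra $\log m$) is the one that matches the stated bound in full generality.
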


\begin{proposition}[{\cite[Theorem~8.7]{aho1974design}}]\label{prop:polyarith}
Given two polynomials $f,g \in \F_2[z]$ of degree at most $w$, we can compute $f(z) \cdot g(z)$ and $f(z) \pmod{g(z)}$ in time $\tilde{O}(1)$.
\end{proposition}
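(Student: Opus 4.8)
The plan is to reduce both tasks to one subroutine: carry-less multiplication, i.e., multiplying two polynomials $f,g\in\F_2[z]$ of degree at most $w$ (the output has degree at most $2w$ and still fits in $O(1)$ words) in $\tilde{O}(1)$ time. I would implement this by reduction to ordinary integer multiplication. Write $f=\sum_i f_i z^i$, $g=\sum_j g_j z^j$ with $f_i,g_j\in\{0,1\}$ and $i,j\le w$, set $b:=\lceil\log_2(w+2)\rceil=O(\log\log n)$, and form the ``spread'' integers $\hat f:=\sum_i f_i\,2^{bi}$ and $\hat g:=\sum_j g_j\,2^{bj}$, obtained from the bit strings of $f$ and $g$ by inserting $b-1$ zero bits between consecutive coefficients. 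Spreading can be done in $\tilde{O}(1)$ word operations by the standard shift-and-mask trick (alternatively, by an $O(\log w)$-depth recursion, each level of which manipulates numbers of $\tilde{O}(1)$ words). Since each operand has at most $w+1$ coefficients, $\hat f$ and $\hat g$ occupy $O((w+1)b)=O(\log n\cdot\log\log n)=\tilde{O}(1)$ words, so their integer product $\hat f\cdot\hat g$ is computable in $\tilde{O}(1)$ word operations by schoolbook multiplication. The key observation is that the $b$ bits of $\hat f\cdot\hat g$ at positions $bk,bk+1,\dots,bk+b-1$ are the binary representation of $c_k:=\sum_{i+j=k}f_ig_j$, and since $0\le c_k\le w+1<2^b$ there is no carry between consecutive blocks; reducing each block modulo $2$ (an AND with the mask $\sum_k 2^{bk}$) and then compressing the blocks back to consecutive bit positions --- again $\tilde{O}(1)$ word operations --- gives the coefficient vector of $f\cdot g$ over $\F_2$.

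Given this subroutine, I would compute $f\bmod g$ by the textbook reduction of polynomial division to multiplication through power-series inversion. We may assume $\deg g\ge 1$ (if $g$ is a nonzero constant then $g=1$ and the remainder is $0$) and $D:=\deg f-\deg g\ge 0$ (otherwise $f\bmod g=f$). For a polynomial $p$ write $\overline{p}(z):=z^{\deg p}\,p(1/z)$ for its coefficient reversal. From $f=qg+r$ with $\deg q=D$ and $\deg r<\deg g$ one gets, after reversing, $\overline{f}=\overline{q}\,\overline{g}+z^{D+1}\overline{r}$, hence $\overline{q}\equiv\overline{f}\cdot\overline{g}^{-1}\pmod{z^{D+1}}$. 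I would compute $\overline{g}^{-1}\bmod z^{D+1}$ by Newton iteration, which over $\F_2$ collapses to $h_0=1$ and $h_{j+1}=h_j^2\cdot\overline{g}\bmod z^{\min(2^{j+1},\,D+1)}$; after $\lceil\log_2(D+1)\rceil=O(\log w)$ iterations this has the desired precision, and each iteration is $O(1)$ carry-less multiplications of polynomials of degree at most $w$ plus a truncation (an AND with a mask), hence $\tilde{O}(1)$. Then $\overline{q}=\overline{f}\cdot h_{\mathrm{final}}\bmod z^{D+1}$, reversing gives $q$, and $r=f+qg$ (one more carry-less multiplication, then an XOR). The reversals are bit-reversal permutations on at most $w+1$ bits, so they cost $\tilde{O}(1)$ by Proposition~\ref{prop:permute}, and all truncations are $\tilde{O}(1)$; the total is $O(\log w)\cdot\tilde{O}(1)=\tilde{O}(1)$.

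The step needing the most care is the carry-less multiplication: one must check that the spacing $b=O(\log\log n)$ really does prevent carries between coefficient blocks (it does, since each integer coefficient of the product is at most $w+1$) and that spreading, the schoolbook multiplication of the inflated operands, and the final compression all stay within $\tilde{O}(1)$ word operations; what makes this work is that, because $w=\Theta(\log n)$, the inflated operands span only $O(\log n\cdot\log\log n)=\tilde{O}(1)$ words, which schoolbook multiplication handles in $\poly\log\log n$ word multiplications. The rest is routine bookkeeping: specializing Newton's iteration to characteristic $2$ (where it becomes $h\mapsto h^2 g$ since $2=0$), truncating every intermediate polynomial so its degree stays $O(w)$, and tracking the reversal degrees in the division identity.
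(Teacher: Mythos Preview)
The paper does not give its own proof of this proposition; it is stated as a citation to \cite{aho1974design} and used as a black box. So there is nothing in the paper to compare against, and your write-up is supplying a proof the paper omits.

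That said, your argument is correct. A couple of small remarks. First, the classical reference you are citing (Aho--Hopcroft--Ullman) obtains $\tilde{O}(1)$ via FFT-based polynomial multiplication; your route through spaced integer multiplication is a different (and in this setting arguably cleaner) implementation that exploits the \WR~assumption that word multiplication is unit cost. Both are valid here. Second, in the Newton step you should note why $h_0=1$ is the right start: since $\overline{g}(0)$ is the leading coefficient of $g$, and we are over $\F_2$, it equals $1$, so $\overline{g}$ is indeed invertible as a power series. Finally, the intermediate product $h_j^2\cdot\overline{g}$ can have degree up to $3w$ before truncation, so strictly speaking you invoke your carry-less multiplication subroutine on inputs of degree $O(w)$ rather than exactly $\le w$; this is harmless (the same spreading argument works with the same $\tilde{O}(1)$ bound), and in fact over $\F_2$ you can compute $h_j^2$ for free by the Frobenius, keeping all multiplicands of degree $\le w$. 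None of this affects the $\tilde{O}(1)$ conclusion.
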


Therefore, if we represent each field element in $\F_{2^w}$ as a degree-$w$ polynomial $f(z)$, and encode it by writing down the coefficients, then any field operation can be computed in $\tilde{O}(1)$ time.

\subsubsection{Word~RAM$^{\text{MM}}$}\label{sec_wrmm}

In the second option, which we call the \WRMM~model, an algorithm is additionally allowed to multiply any two matrices over $\Z$ which each fit into a single word as a single word operation. For instance, the algorithm could multiply two $\sqrt{w} \times \sqrt{w}$ matrices whose entires are $O(1)$-bit integers in $\tilde{O}(1)$ time. This model is not necessarily unrealistic: in practice, engineers may design specific hardware to handle certain important word operations in order to speed up algorithms, and in particular, there has been substantial work on hardware for the fast multiplication of small matrices (see e.g.~\cite{fatahalian2004understanding,fujimoto2008faster,mnih2009cudamat,lee2010debunking}).

That said, this model is particularly interesting from a theoretical perspective, because of its relationship with the \emph{cell-probe model}. Lower bounds against streaming algorithms are typically proved in the cell-probe model, in which an algorithm is only charged for accessing words of memory, and not for any computation on the contents; the \WRMM~model is still weaker than the cell-probe model. We will show that a $\lin(2,2,\log n)$ linear sketch can be maintained with $O(w^{\omega/2 - 1})$ update time and $O(w^{\omega/2})$ query time in the \WRMM~model, where $\omega$ is the matrix multiplication constant\footnote{Here we are using the `usual' definition of $\omega$, in terms of the size of an arithmetic circuit for performing matrix multiplication. Matrix multiplication can be performed faster than this in the cell probe model, but as discussed in Section~\ref{subsec:ourtechnique}, it is not evident how to use this in conjunction with our approach to design a faster cell probe algorithm.}. If $\omega=2$ then this would result in $\tilde{O}(1)$ update time and $\tilde{O}(w)$ query time. Hence, any cell-probe lower bounds for this problem could have substantial implications in arithmetic complexity theory.

\section{Reduction to Batch Problem}\label{sec_reduction}

Consider a dynamic problem $\cP$ with updates $\cU$ (and queries $\cQ$) on word RAM with word-size $w$.
For now, let us assume each update can fit in one word, i.e., $|\cU|\leq 2^w$.
Let $S$ be the minimum number of \emph{words} required to solve $\cP$.
In the following, we show that an algorithm that handles a batch of updates can be transformed into a data structure with \emph{worst-case} update time.

More specifically, fix a data structure $\cD$ for $\cP$ that uses $S$ words of space and has query time $t_q$, and consider the following computational problem.
\begin{center}
\parbox{0.9\textwidth}{
	{\bf \bud{B}:} Given a memory state $M$ of $\cD$ and $B$ updates $u_1,u_2,\ldots,u_B\in\cU$, compute the new memory state after all $B$ updates are applied in the order.
	If $\cD$ is randomized, sample a new memory state according to the distribution defined by $\cD$.
}
\end{center}
Note that the input length of this problem is $O(S+B)$ words.
The following theorem asserts that if the $B$ updates can be handled in a batch efficiently, then $\cP$ has a solution with \emph{worst-case} update time.

\begin{restate}[Proposition~\ref{prop_batch}]
	For any $B\leq O(S)$, if there is a RAM algorithm $\cA$ that solves \bud{B} in time $t_{\batch}$ and space $O(S)$ words, then there is a data structure for $\cP$ that
	\begin{itemize}
	 	\item uses $O(S)$ words of space,
	 	\item has worst-case update time $O(t_{\batch}/B)$, and
	 	\item worst-case query time $O(t_{\batch}+t_q)$.
	 \end{itemize}
\end{restate}
In particular, the contrapositive implies if the update time must be super constant, then \bud{B} has no linear time algorithm on RAM.

\begin{proof}
To construct a data structure with fast update time, the natural idea is to buffer the updates and handle the updates in a batch using $\cA$.
The update time would then be $O(t_{\batch}/B)$ \emph{amortized}.
However, it turns out that a simple trick can deamortizes it.

We will use two buffers \buf$_0$ and \buf$_1$, both of size $B$.
Each time we receive an update, it is put into \buf$_0$.
Once \buf$_0$ becomes full, we are going to put the subsequent updates into \buf$_1$, while at the same time we gradually flush the first buffer \buf$_0$.
That is, each time we receive a new update, it is put into \buf$_1$, then we simulate $\cA$, which handles all $B$ updates in \buf$_0$ in $t_{\batch}$ time, for $O(t_{\batch}/B)$ steps.\footnote{Note that a RAM algorithm uses only $O(1)$ registers, including a pointer to the line of code it is currently executing. Hence, with an extra counter, one can run a RAM algorithm for a certain number of steps and pause. Next time, we may continue from there.}
Since \buf$_1$ can hold another $B$ updates, we will be able to finish simulating $\cA$ before it gets full.
Once \buf$_1$ becomes full, we will switch the roles of the two buffers: put the subsequent updates into \buf$_0$ and gradually flush \buf$_1$ using $\cA$.
To answer a query, it suffices to finish flushing the buffer, and then run the query algorithm of $\cD$.
See Figure~\ref{fig_datastr} for details.

Since $\cA$ uses $O(B)$ words of space, the total space usage is $O(B)$.
The update time is $O(t_{\batch}/B)$ in worst-case, and the query time is $O(t_{\batch}+t_q)$.

\begin{figure}[!ht]
\begin{center}
\fbox{
{\footnotesize

\parbox{6.375in} {
{memory}:
\vspace{-4pt}\begin{enumerate}
	\addtolength{\itemsep}{-4pt}
	\item $M$: memory state of $\cD$, initialized according to $\cD$ \hfill // $S$ words
	\item \buf$_0$, \buf$_1$: two initially empty buffers that each can store up to $B$ updates \hfill // $B$ words each
	\item $b$: indicate the active buffer, initially set to $0$ \hfill // $1$ bit
	\item {\it temp}: the working memory of $\cA$ \hfill // $O(S)$ words
\end{enumerate}

\medskip
{update}($u$): \hfill // handle an update $u\in \cU$
\vspace{-4pt}\begin{enumerate}
\addtolength{\itemsep}{-4pt}
\item append $u$ to \buf$_b$
\item {\bf if} \buf$_b$ is full {\bf then}
\item \quad $b\leftarrow 1-b$
\item \quad start a new instance of {background\_update}()
\item {\bf if} {background\_update}() has not terminated {\bf then}
\item \quad simulate {background\_update}() for $O(t_{\batch}/S)$ steps
\end{enumerate} 

background\_update(): 
\vspace{-4pt}\begin{enumerate}
\addtolength{\itemsep}{-4pt}
	\item reset {\it temp}
	\item run $\cA$ on ($M$, \buf$_{1-b}$) and obtain new memory state $M'$
	\item copy $M'$ to $M$
	\item empty \buf$_{1-b}$
\end{enumerate}

\bigskip
query($q$): \hfill // handle a query $q\in \cQ$
\vspace{-4pt}\begin{enumerate}
\addtolength{\itemsep}{-4pt}
	\item {\bf if} {background\_update}() has not terminated {\bf then}
	\item \quad finish {background\_update}()
	\item $b\leftarrow 1-b$
	\item background\_update()
	\item run the query algorithm of $\cD$
\end{enumerate}
}}}
\caption{Data structure for $\cP$ with $O(t_{\batch}/B)$ worst-case update time and $O(S)$ space.}\label{fig_datastr}
\end{center}
\end{figure}

\end{proof}

\section{Update Efficient Streaming Algorithm}\label{sec_upper}

In this section, we present our update-efficient streaming algorithm for $\lin(k,c,T)$, and prove Theorem~\ref{thm_main}.

\begin{restate}[Theorem~\ref{thm_main}]
	\thmmaincontMM
\end{restate}

To prove the theorem, we first apply Proposition~\ref{prop_batch} and set the buffer size $B=\log n$, and reduce the problem to applying $\log n$ updates in batch. 
It turns out that to apply $\log n$ updates, it suffices to 
\begin{enumerate}
	\item evaluate $\log n$ $c$-wise independent hash families on all $\log n$ updated indices, and
	\item compute a matrix-vector product, where the matrix is a $\log n\times \log n$ binary matrix, and the vector has $\log n$ dimensions with $\log n$-bit numbers in the entries.
\end{enumerate}

In the subsections below, we show that both tasks can be done efficiently using fast matrix multiplication for small matrices.
\newcommand{\lemhashcont}{
	For any constant $c\geq 2$, there is a $c$-wise independent hash family $h_s:\{0,1\}^w\rightarrow \{0,1\}$ for $s\in \{0,1\}^{cw}$, such that given $w$ seeds $s_1,\ldots,s_w$ and $w$ inputs $u_1,\ldots,u_w$, one can compute $w$ $w$-bit strings $v_1,\ldots,v_w$ in $O(w^{1.582})$ time and $O(w)$ words of space, such that $v_i$ stores the $w$ hash values of $u_i$, i.e., the $j$-th bit of $v_i$ is equal to $h_{s_j}(u_i)$ for all $i,j\in[w]$.
	Moreover, the running time can be reduced to $O(w^{1.187})$ on a \WRMM{}.
}
\begin{lemma}\label{lem_hash}
	\lemhashcont
\end{lemma}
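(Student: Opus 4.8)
\textbf{Proof proposal for Lemma~\ref{lem_hash}.}

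The plan is to choose a $c$-wise independent hash family whose evaluation is, at its core, a polynomial evaluation over $\F_{2^w}$, and then batch the $w \times w$ such evaluations into a single (or a constant number of) small matrix multiplications. Concretely, I would work over $\F_{2^w}$ (elements represented as degree-$<w$ polynomials in $\F_2[z]$, with arithmetic in $\tilde O(1)$ time by Proposition~\ref{prop:polyarith}). For a seed $s = (a_0, a_1, \ldots, a_{c-1}) \in (\F_{2^w})^c$ and input $u \in \{0,1\}^w \cong \F_{2^w}$, set $p_s(u) := \sum_{\ell=0}^{c-1} a_\ell u^\ell \in \F_{2^w}$ and let $h_s(u)$ be a fixed $\F_2$-linear projection of $p_s(u)$ to one bit (say its top bit, or the parity of its coefficients); standard arguments show this is $c$-wise independent as a map to $\{0,1\}$. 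The point of this particular family is that $p_{s_j}(u_i)$ is \emph{bilinear} in the seed coefficients and in the ``power vector'' $(1, u_i, u_i^2, \ldots, u_i^{c-1})$.

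The key steps, in order, are: (1) For each input $u_i$, compute the power vector $P_i = (u_i^0, \ldots, u_i^{c-1}) \in (\F_{2^w})^c$; since $c = O(1)$ this is $O(w)$ field elements total, computable in $O(w \cdot \tilde O(1)) = \tilde O(w)$ time. (2) Observe that the full set of values $\{p_{s_j}(u_i)\}_{i,j \in [w]}$ is a product of a $w \times c$ matrix (rows indexed by $j$, holding the coefficients of $s_j$) with a $c \times w$ matrix (columns indexed by $i$, holding $P_i$), where the entries are elements of $\F_{2^w}$ and ``multiplication'' is field multiplication. Since $c = O(1)$, this is a constant number of $w \times 1 \times w$ outer-product-type computations over $\F_{2^w}$; but each field multiplication itself costs $\tilde O(1)$, so done naively this is $\tilde O(w^2)$ — too slow. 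Instead, I reduce a single field multiplication $x \cdot y$ in $\F_{2^w}$ to multiplication of $\F_2$-matrices: writing $x = \sum x_a z^a$ and $y = \sum y_b z^b$, the polynomial product $x(z)y(z) = \sum_d (\sum_{a+b=d} x_a y_b) z^d$ has coefficients given by a bilinear form; the whole batch of $w^2$ products $p_{s_j}(u_i)$ then unfolds into multiplying $O(1)$ pairs of $0$-$1$ matrices of dimensions roughly $w \times w \times w$ (after suitable padding/interleaving so the $z$-degree index and the $i,j$ indices are laid out correctly), followed by reduction mod the defining polynomial of $\F_{2^w}$, which by Proposition~\ref{prop:polyarith} costs only $\tilde O(w)$ amortized. (3) Finally, apply the fixed $\F_2$-linear projection to extract $h_{s_j}(u_i)$, and use Proposition~\ref{prop:permute} (a fixed permutation) to repack the bits so that $v_i$ holds $h_{s_1}(u_i) \cdots h_{s_w}(u_i)$ contiguously; this is $\tilde O(w)$ time. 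The cost is therefore dominated by the $w \times w \times w$ $0$-$1$ matrix multiplications, which by the algorithms of Section~\ref{sec_mm} take $O(w^{2\omega/3}) = O(w^{1.582})$ time on the standard \WR, and $O(w^{\omega/2}) = O(w^{1.187})$ time on a \WRMM. Space is $O(w)$ words throughout, since all intermediate matrices have $O(w^2)$ bits $= O(w)$ words and the matrix multiplication routines of Section~\ref{sec_mm} are claimed to run in $O(w)$ space.

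The main obstacle I anticipate is the bookkeeping of step (2): getting the bit-layout exactly right so that the $c$-fold sum $\sum_\ell a_{j,\ell} u_i^\ell$, the polynomial-convolution index $d$, and the two ``outer'' indices $i$ and $j$ all get mapped into the three axes of a single $0$-$1$ matrix multiplication of size $O(w) \times O(w) \times O(w)$ — rather than, say, needing $w$ separate small multiplications or blowing up one dimension by a factor of $w$. The natural approach is to absorb the degree axis (of length $\le 2w$) into one of the matrix dimensions and keep $i$, $j$ as the other two, so that the product matrix is indexed by $(d, \text{pair})$ or $(j, d)$ appropriately; since $c$ is constant the $\ell$-summation costs only a constant factor. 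A secondary subtlety is arguing that the one-bit projection of a $c$-wise-independent-over-$\F_{2^w}$ family is still $c$-wise independent over $\{0,1\}$ with the \emph{uniform} marginal on each bit — this is where choosing the projection to be a nontrivial $\F_2$-linear functional (so that it is balanced, i.e., $2$-to-$1$ as a map $\F_{2^w} \to \F_2$) matters, and I would state this as a short self-contained claim. Everything else (power-vector computation, mod reduction, bit repacking) is routine given Propositions~\ref{prop:permute} and~\ref{prop:polyarith}.
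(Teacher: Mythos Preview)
Your choice of hash family is essentially the same as the paper's (a degree-$<c$ polynomial over $\F_{2^w}$ followed by an $\F_2$-linear projection), and the high-level plan of batching the $w\times w$ evaluations into a small matrix multiplication is exactly right. The gap is in step~(2): you propose to first compute \emph{all} of the field elements $p_{s_j}(u_i)\in\F_{2^w}$ and only afterwards project to a bit. That intermediate object is $w^2$ field elements, i.e.\ $w^3$ bits $= w^2$ words. A single $w\times w\times w$ matrix product over $\F_2$ outputs only $w^2$ bits, so $O(1)$ such products cannot possibly produce the table $\{p_{s_j}(u_i)\}_{i,j}$; indeed, merely writing it down already exceeds both the $O(w)$-word space budget and the $O(w^{1.582})$ time budget. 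Your ``three axes'' worry is real: with the degree index $d$ \emph{and} the two outer indices $i,j$ all genuinely present in the output, the target tensor has $\Theta(w^3)$ entries, and no layout trick fixes that.

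The fix is to fold the projection $L$ into the computation \emph{before} the matrix product, and this is precisely what the paper does. Since multiplication in $\F_{2^w}$ is $\F_2$-bilinear and $L$ is $\F_2$-linear, the map $(a,x)\mapsto L(a\cdot x)$ is an $\F_2$-bilinear form, so $L(a\cdot x)=\langle Ma,x\rangle_{\F_2}$ for some fixed $w\times w$ matrix $M$. Hence
\[
h_{s_j}(u_i)=\sum_{\ell}L\bigl(a_{j,\ell}\cdot u_i^{\ell}\bigr)=\bigl\langle\, (Ma_{j,0},\ldots,Ma_{j,c-1})\,,\,(1,u_i,\ldots,u_i^{c-1})\,\bigr\rangle_{\F_2},
\]
i.e.\ $h_s(u)=\langle s',g(u)\rangle$ with $g(u)=(1,u,\ldots,u^{c-1})\in\F_2^{cw}$ and a reparametrized seed $s'\in\F_2^{cw}$. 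The paper simply \emph{defines} the family this way from the start: take an arbitrary seed $s\in\F_2^{cw}$ and set $h_s(u)=\langle s,g(u)\rangle$; $c$-wise independence follows from the $\F_2$-linear independence of $g(u_1),\ldots,g(u_c)$ (Vandermonde over $\F_{2^w}$ implies independence over $\F_2$). Now the entire batch $\{h_{s_j}(u_i)\}_{i,j}$ is literally the matrix product $\mathbf{S}\mathbf{X}$ over $\F_2$, where $\mathbf{S}$ is $w\times cw$ with rows $s_j$ and $\mathbf{X}$ is $cw\times w$ with columns $g(u_i)$. There is no degree index, no polynomial convolution, no mod-$Q$ reduction in the batch step; the only place $\F_{2^w}$ arithmetic enters is in computing each $g(u_i)$, which is $\tilde O(c)$ per input via Proposition~\ref{prop:polyarith}. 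The $w\times cw\times w$ product over $\F_2$ then gives the claimed times directly.
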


\newcommand{\lemmvprodcont}{
	Given as input a matrix $A \in \{0,1\}^{w \times w}$ and a vector $v \in \Z^w$ of $w$-bit integers, one can compute the product $Av$ in $O(w)$ words of space and time $O(w^{1.582})$ on a word RAM, and in time $O(w^{1.187})$ on a \WRMM{}.
}
\begin{lemma}\label{lem_mvprod}
	\lemmvprodcont
\end{lemma}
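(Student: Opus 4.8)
The plan is to reduce the matrix-vector product $Av$, where $A\in\{0,1\}^{w\times w}$ and $v\in\Z^w$ has $w$-bit entries, to a handful of $0$-$1$ matrix multiplications of shape $w\times w\times w$, and then invoke the fast small-matrix-multiplication routine that the paper establishes in Section~\ref{sec_mm} (running in $O(w^{2\omega/3})=O(w^{1.582})$ time on a word RAM, and $O(w^{\omega/2})=O(w^{1.187})$ time on a \WRMM{}), all within $O(w)$ words of space. First I would handle the bit-decomposition of $v$: write each entry $v_i$ in binary as $v_i=\sum_{\ell=0}^{w-1} 2^\ell b_{i,\ell}$ with $b_{i,\ell}\in\{0,1\}$ (treating a sign bit separately, or working with two's-complement and correcting at the end). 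This turns $v$ into a $0$-$1$ matrix $\mathbf{B}\in\{0,1\}^{w\times w}$ whose $(i,\ell)$ entry is $b_{i,\ell}$. Extracting all these bits from the $O(w)$ words holding $v$ is an $\tilde O(w)$-time word-level operation (or can itself be folded into transposition-type primitives like Proposition~\ref{prop:permute}).

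Next, observe that $(Av)_r = \sum_i A_{r,i} v_i = \sum_\ell 2^\ell \big(\sum_i A_{r,i} b_{i,\ell}\big) = \sum_\ell 2^\ell (A\mathbf{B})_{r,\ell}$. So it suffices to compute the integer matrix $C := A\mathbf{B}\in\Z^{w\times w}$ (each entry is a nonnegative integer at most $w$, hence $O(\log w)$ bits), and then recombine: $(Av)_r=\sum_{\ell=0}^{w-1} 2^\ell C_{r,\ell}$, which is a length-$w$ dot product of the $r$-th row of $C$ with the fixed vector $(1,2,4,\ldots,2^{w-1})$. To compute $C=A\mathbf{B}$ over $\Z$ from $0$-$1$ inputs using a $0$-$1$ (i.e., $\bmod 2$, or Boolean) matrix multiplication primitive, I would use the standard bit-slicing trick: the integer entries of $A\mathbf{B}$ are determined by their $O(\log w)$ low-order bits, and these can be recovered by computing $A\mathbf{B}$ modulo $2, 4, \ldots$ — concretely, compute $O(\log w)$ products of the form $A_j\mathbf{B}_j$ over $\F_2$ where $A_j,\mathbf{B}_j$ encode appropriately shifted/masked bits, and reassemble with $O(\log w)$ additions; each such product is one $w\times w\times w$ matrix multiplication handled by Section~\ref{sec_mm}. (Alternatively, one can compute $A\mathbf{B}$ directly over the integers by the same recursive algorithm with a constant-time base case, since the base-case products involve entries of magnitude $O(\mathrm{poly}\,w)$ that still fit in a word; I would use whichever is cleanest given the exact interface proved in Section~\ref{sec_mm}.) Finally, the recombination step $\sum_\ell 2^\ell C_{r,\ell}$ over all $r$ is itself a matrix-vector product with small ($O(\log w)$-bit) matrix entries and a fixed vector; this can be done in $\tilde O(w)$ time by a single pass, or if one insists on reusing the matmul primitive, by yet another reduction to a $w\times w\times O(1)$ product. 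Summing up, the dominant cost is the $O(\log w)$ invocations of $w\times w\times w$ $0$-$1$ matrix multiplication, giving total time $\tilde O(w^{2\omega/3})=O(w^{1.582})$ on the word RAM and $\tilde O(w^{\omega/2})=O(w^{1.187})$ on the \WRMM{}, and the space is $O(w)$ words throughout since every intermediate object is a $w\times w$ $0$-$1$ matrix ($O(w)$ words) or a handful of length-$w$ integer vectors.

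I expect the main obstacle to be the bookkeeping around entry sizes and sign handling: ensuring that when we lift a $0$-$1$ matrix multiplication primitive to compute $A\mathbf{B}$ over $\Z$, no intermediate quantity exceeds $w$ bits (so it fits in a word and the claimed word-RAM primitive applies), and that the $2^\ell$-weighted recombination of the columns of $C$ is carried out without blowing the time past $\tilde O(w^{1.582})$ — in particular, the recombination produces the true $w$-bit entries of $Av$ only after carrying, so one must argue the carry propagation is an $\tilde O(w)$-per-output or $O(\mathrm{poly}(w))$-total operation. Handling negative $v_i$ (and hence negative partial sums) cleanly — e.g., by splitting $v$ into its positive and negative parts, running the above on each, and subtracting — is routine but is the kind of detail that needs care to keep the word-size invariant intact. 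Everything else is a direct appeal to Proposition~\ref{prop:permute}, Proposition~\ref{prop:polyarith}, and the small-matrix-multiplication lemmas of Section~\ref{sec_mm}.
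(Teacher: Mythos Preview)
Your overall reduction is exactly the paper's: bit-decompose $v$ into a $0$-$1$ matrix $B\in\{0,1\}^{w\times w}$, compute $C=AB$, and recombine via $v'[i]=\sum_j 2^{j-1}C[i,j]$. The one place you take an unnecessary detour is the middle step. You treat the small-matrix primitive as if it only works over $\F_2$ and then propose a somewhat vague bit-slicing scheme (``compute $AB$ modulo $2,4,\ldots$'') to lift to $\Z$; but Theorems~\ref{thm_mm_mm} and~\ref{thm:mm} in Section~\ref{sec_mm} already give $w\times w\times w$ multiplication over $\F_q$ for any $q\le\poly(w)$ in the stated times. Since every entry of $AB$ is an integer in $[0,w]$, a \emph{single} call with $q>w$ already returns the exact integer product, and no bit-slicing, CRT, or $O(\log w)$-fold repetition is needed. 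Your parenthetical ``alternatively, compute $AB$ directly over the integers by the same recursive algorithm'' is precisely what the paper does, and is the cleaner route. The sign/carry worries you flag are not addressed in the paper's proof either (it silently assumes nonnegative entries or two's-complement handling), so your instinct to split $v$ into positive and negative parts is a reasonable way to make that rigorous without affecting the asymptotics.
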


Now, we proof Theorem~\ref{thm_main} using the above lemmas.

\begin{proof}[Proof of Theorem~\ref{thm_main}]
	We first design a fast batch update algorithm, which handles $w=\log n$ updates to $\lin(k,c,\log n)$ in $O(w^{1.582})$ time on a word RAM.
	The algorithm is given as input, a memory state of $\lin(k,c,\log n)$ and $w$ updates $(u_1,\Delta_1)$, $\ldots$, $(u_w,\Delta_w)$.
	By definition, $\lin(k,c,w)$ uses $w$ $c$-wise independent hash functions $h_1,\ldots,h_w:[n]\rightarrow [k]$, and stores the random seeds, as well as $kw$ counters: $\sum_{u:h_j(u)=b}\nu_u$ for each $j\in[w]$ and $b\in [k]$.
	To perform the batch update, we first compute for all $i,j\in[w]$, which of the $kw$ counters $\Delta_i$ needs to be added to. 

	To this end, we use the $c$-wise independent hash family and the batch evaluation algorithm in Lemma~\ref{lem_hash}.
	Note that Lemma~\ref{lem_hash} only considers such hash functions with one bit output.
	To apply to our problem, we apply the lemma on each of the $\log k$ output bits, since $k$ is a power of two.
	Therefore, in $O(w^{1.582})$ time, we compute for every $u_i$, $\log k$ bit-strings $v'_{i,1},\ldots,v'_{i,\log k}$ such that for each $j\in [w]$, the $j$-th bits of the $\log k$ strings encode the binary representation of $h_j(u_i)$.

	Next, we post-process the binary representations into indicator vectors.
	That is, we will compute a $kw$-bit string $v_i$, stored in $k$ words, that encodes for each of the $kw$ counters, whether $\Delta_i$ needs to be added to it. 
	This can be done in $O(k\log k)$ time: For each $b\in[k]$, a $w$-bit string indicating if $h_j(u_i)=b$ for each $j\in[w]$, can be computed by doing $O(\log k)$ bit-wise ANDs and negations.


	After such transformation, we compute the changes to the $kw$ counters. 
	More specifically, the $l$-th counter needs to increase by $\sum_{\textrm{$i$: the $l$-th bit of $v_i$ $=1$}} \Delta_i$.
	The task is precisely computing the matrix-vector multiplication of 
	\begin{itemize}
		\item a $kw$ by $w$ binary matrix, whose columns are $v_1,\ldots,v_w$, and
		\item a $w$ dimensional vector, whose entries are $w$-bit integers $\Delta_i$.
	\end{itemize}
	Next, we will use Lemma~\ref{lem_mvprod} to compute the product.
	We first use Proposition~\ref{prop:permute} to permute the bits in the matrix and the vector, so that it matches the input format of the lemma, which takes $\tilde{O}(kw)$ time.
	Then, we apply the lemma, since $k<\log^{o(1)} n$, the product can be computed in $O(w^{1.582})$ time.
	At last, we use Proposition~\ref{prop:permute} again to permute the bits in the output, so that the $l$-th word of the output contains the $l$-th entry in the product vector, i.e., the change to the $l$-th counter.
	We add the product to counters.
	
	The batch update algorithm for $w$ updates runs in $O(w^{1.582})$ time, and $O(kw)$ space.
	Finally, the theorem is proved by applying Proposition~\ref{prop_batch} for $B=w$.
	By applying the \WRMM{} versions of Lemma~\ref{lem_hash} and Lemma~\ref{lem_mvprod}, we obtain the claimed update and query times for \WRMM{}.
\end{proof}



	Both of Lemma~\ref{lem_hash} and Lemma~\ref{lem_mvprod} use fast matrix multiplication for small matrices.

	\begin{theorem}\label{thm_mm_mm}
	In the \WRMM~model, one can perform $w \times w \times w$ matrix multiplication over $\F_q$ for $q \leq \poly(w)$ in time $O(w^{\omega/2+\eps})$ for any $\eps>0$.
	\end{theorem}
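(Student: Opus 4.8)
The plan is to use the standard recursive blocking strategy for fast matrix multiplication, but to stop the recursion at a base case that is large enough to be handled in a single \WRMM{} word operation, rather than at $1\times 1$ matrices. Recall that a \WRMM{} word operation can multiply two integer matrices that each fit in a word, so in particular it can multiply two $\sqrt{w}\times\sqrt{w}$ matrices whose entries have $O(1)$ bits. Since we are working over $\F_q$ with $q\le\poly(w)$, each field element fits in $O(\log w)$ bits, and a $\sqrt{w}\times\sqrt{w}$ matrix over $\F_q$ occupies $O(\sqrt{w}\cdot\sqrt{w}\cdot\log w)=O(w\log w)$ bits, which is slightly more than one word. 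To fix this, I would instead take the base case to be matrices of dimension $\sqrt{w}/\polylog(w)$ (or more precisely $\sqrt{w/\log w}$ up to constants), so that such a matrix over $\F_q$ fits in a single $w$-bit word, and its product can be computed in $\tilde O(1)$ time: do the integer matrix multiplication as a word operation, then reduce each entry mod $q$ (there are only $O(w/\log w)$ entries, each reducible in $\tilde O(1)$ time, or batched). Call this base dimension $b=\Theta(\sqrt{w/\log w})$.

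The recursion is then: to multiply two $m\times m$ matrices over $\F_q$ with $m\ge b$, split each into $b\times b$ blocks (so each block is $(m/b)\times(m/b)$), apply a fixed bilinear algorithm for $b\times b\times b$ matrix multiplication using $b^{\omega+o(1)}$ block-multiplications — actually, since $b$ is itself growing, I would instead recurse using the standard $2\times2$ Strassen-type identity (or any fixed $c\times c\times c$ scheme with $c^\omega$ multiplications) down to the base dimension $b$; let me phrase it as: halve the dimension repeatedly using Strassen until reaching dimension $b$, then invoke the word operation. The number of base-case multiplications to multiply $w\times w\times w$ is $(w/b)^{\omega+o(1)}$ (the $o(1)$ absorbing the loss from not using the true $\omega$ at finite size, handled in the usual way by letting the recursion use an $c\times c$ scheme with $c^{\omega+\eps}$ multiplications for $c$ a large constant, as in the standard argument). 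The additive work at each level of recursion (the $O(1)$ matrix additions over $\F_q$ per Strassen node) costs $\tilde O(\text{size})$ time, which sums over all levels to $\tilde O(w^2)$ total — dominated by the multiplication cost. Hence the total running time is
\[
(w/b)^{\omega+\eps}\cdot\tilde O(1) \;=\; \tilde O\!\left(\frac{w^{\omega+\eps}}{(w/\log w)^{(\omega+\eps)/2}}\right) \;=\; \tilde O\!\left(w^{(\omega+\eps)/2}\right) \;=\; O(w^{\omega/2+\eps'})
\]
for any $\eps'>0$ (choosing $\eps$ appropriately and absorbing the $\polylog$), and the space is $O(w)$ words since each recursive call only needs working space proportional to the sizes of the submatrices it handles, and the recursion depth is $O(\log w)$.

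The main technical point to get right is the base case: verifying that two matrices of dimension $b=\Theta(\sqrt{w/\log w})$ over $\F_q$ can genuinely be encoded into single words so that one \WRMM{} matrix-multiply operation plus $\tilde O(1)$ post-processing yields their product over $\F_q$. Here one must check that the integer entries of the product (before reduction mod $q$) are bounded by $b\cdot(q-1)^2\le\poly(w)$, so they still fit in $O(\log w)$ bits and the whole product matrix fits in one word; this is where the $\polylog$ slack in the base dimension is spent, and it is why we cannot quite take $b=\sqrt{w}$. A secondary point is the standard one of converting the asymptotic definition of $\omega$ into a usable finite bound: since $\omega$ is defined via arithmetic circuit size, for any $\eps>0$ there is a constant $c$ and a bilinear algorithm multiplying $c\times c$ matrices with at most $c^{\omega+\eps}$ multiplications, and recursing with this fixed scheme down to dimension $b$ gives the claimed exponent — this is routine and I would cite it rather than reprove it. Everything else (bit-packing layout, modular reduction of a packed vector of residues, the additive bookkeeping) is the kind of $\tilde O(1)$ word-RAM manipulation already used freely elsewhere in the paper via Propositions~\ref{prop:permute} and~\ref{prop:polyarith}.
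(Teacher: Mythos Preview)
Your approach is essentially the paper's: recurse with a fixed $c\times c$ bilinear identity of rank $c^{\omega+\eps}$ down to a base case of side roughly $\sqrt{w}$, then dispatch the base case with the \WRMM{} operation. The paper packages this via its general recursion (Theorem~\ref{mmalg}) instantiated with $a=b=c=a'=b'=c'=1/2$, together with the base case Proposition~\ref{prop:allmodel}; it keeps the base at exactly $\sqrt{w}$ and absorbs the $O(\log w)$ entry-size into $\tilde{O}(1)$ words (handled by bit-slicing into $O(\log w)$ single-word $0$--$1$ matrix products). Your choice of shrinking the base to $\Theta(\sqrt{w/\log w})$ so that everything literally fits in one word is an equally valid variant.

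There is one slip to fix. You write that the additive bookkeeping ``sums over all levels to $\tilde O(w^2)$ total --- dominated by the multiplication cost.'' But $w^2$ is certainly \emph{not} dominated by $w^{\omega/2+\eps}\approx w^{1.19}$, so as stated the argument does not close. The correct accounting uses word-packing: an $m\times m$ matrix over $\F_q$ with $q\le\poly(w)$ occupies $O(m^2\log w/w)$ words, so a single matrix addition costs that many word operations. Summing $r^\ell\cdot (w/c^\ell)^2\log w/w$ over the recursion levels, with $r=c^{\omega+\eps}$ and final level $c^\ell\approx\sqrt{w}$, gives $\tilde O(w^{\omega/2+\eps})$, which matches the recursive multiplication cost. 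Without packing the linear-algebra overhead into words the additive cost would in fact be $\Theta(w^{1+\omega/2})$, so this point is not cosmetic; you should state the packed-addition cost explicitly rather than the $\tilde O(w^2)$ figure.
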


	\begin{theorem}\label{thm:mm}
	In the \WR~model, for any $p \in [0,1]$, and any $\eps>0$, one can perform $w \times w^p \times w$ matrix multiplication over $\F_q$ for $q \leq \poly(w)$ in time 
	\begin{itemize}
	    \item $O(w^{(1+p)\omega/3 + \eps}) \leq O(w^{0.791 \cdot (1+p)})$ if $p \geq 1/2$,
	    \item $O(w^{\omega(1,2p,1)/2 + \eps})$ if $p \leq 1/2$.
	\end{itemize}
	\end{theorem}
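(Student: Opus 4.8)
The plan is to combine the standard recursive (Strassen-type) approach to fast matrix multiplication with an unusually large \emph{base case} executed directly by packing matrix entries into $\tilde{O}(1)$ machine words and reading inner products off an integer multiplication. The crucial arithmetic fact is that an entry of $\F_q$ with $q\le\poly(w)$ occupies $\Theta(\log w)$ bits, and a sum of at most $\poly(w)$ products of such entries is an integer of absolute value $\le\poly(w)$, hence still fits in $\Theta(\log w)$ bits; so many of them can be laid side by side, each in its own ``slot'', inside a $w$-bit word. Working over $\Z$ and reducing modulo $q$ only at the very end keeps every intermediate value $\poly(w)$-bounded throughout an $O(\log w)$-depth recursion. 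The two regimes $p\ge 1/2$ and $p\le 1/2$ in the statement will correspond exactly to whether the base case can afford a nontrivial middle dimension.

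\textbf{Step 1 (base case).} First I would show that an $a\times b\times c$ product over $\F_q$ can be computed in $\tilde{O}(1)$ time whenever the volume $abc$ is $\tilde{O}(w)$ (which also forces each pairwise product of dimensions to be $\tilde{O}(w)$). The transparent instance is the \emph{outer product} $a=c=\tilde\Theta(w^{1/2})$, $b=1$: encoding the column of $A$ as $\alpha=\sum_i A_{i1}2^{(i-1)s}$ and the row of $B$ as $\beta=\sum_j B_{1j}2^{(j-1)as}$ for slot width $s=\Theta(\log w)$, the integer $\alpha\beta$ deposits $A_{i1}B_{1j}$ into slot $(i-1)+(j-1)a$ with no carries, i.e.\ it is a packed encoding of the product (splitting into $\tilde{O}(1)$ sub-blocks if the output exceeds two words). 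For a genuine inner-product shape $a=c=b_1$, $b=b_2$ with $b_1^2b_2=\tilde\Theta(w)$, I would use a layout with three spacing scales: place $A_{ik}$ at bit position $(i-1)s_1+(k-1)s_2$ and $B_{kj}$ at $(b_2-k)s_2+(j-1)s_3$, with $s_2=\Theta(\log w)$, $s_1$ a sufficiently large multiple of $b_2 s_2$, and $s_3$ a sufficiently large multiple of $b_1 s_1$. Then in $\alpha\beta$ the ``diagonal'' contributions ($k=k'$) all land at the single position $(i-1)s_1+(b_2-1)s_2+(j-1)s_3$ and sum to $C_{ij}$, while every ``off-diagonal'' contribution ($k\neq k'$) is displaced by the nonzero amount $(k-k')s_2$ and, by the choice of $s_1,s_3$, falls in a slot disjoint from every target slot with no carries across slot boundaries; the occupied bits span $\tilde\Theta(b_1^2b_2)=\tilde\Theta(w)$, so $\tilde{O}(1)$ integer multiplications suffice, and reading off the $b_1^2$ target slots and reducing finishes in $\tilde{O}(1)$ time. (For $q$ prime; the prime-power case is the same, representing $\F_q$ over its prime subfield and using Proposition~\ref{prop:polyarith}.)

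\textbf{Step 2 (recursion).} I would then tile the desired $w\times w^p\times w$ product into base-case blocks and run a recursive fast matrix multiplication algorithm on the resulting block array, where each ``scalar multiplication'' of the recursion is a base-case product from Step 1 ($\tilde{O}(1)$ time) and each ``scalar addition/subtraction'' is an entrywise $\Z$-operation on a packed block of $\le\tilde\Theta(w)$ bits ($\tilde{O}(1)$ time). The block shape is chosen to make the block-level problem as cheap as possible:
\begin{itemize}
\item For $p\ge 1/2$: take blocks of shape $w^{(2-p)/3}\times w^{(2p-1)/3}\times w^{(2-p)/3}$. Its exponents are nonnegative precisely because $p\ge 1/2$; its volume is $w$ up to $\polylog(w)$ factors; and its output dimensions multiply to $w^{2(2-p)/3}\le w$ again precisely because $p\ge 1/2$ --- so Step 1 applies. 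With these blocks, $A$ and $B$ each become $w^{(1+p)/3}\times w^{(1+p)/3}$ arrays of blocks, i.e.\ a \emph{square} block-matmul of side $N=w^{(1+p)/3}$, and a recursive algorithm realizing the exponent $\omega$ runs in $\tilde{O}(N^{\omega+\eps'})=O(w^{(1+p)\omega/3+\eps})$.
\item For $p\le 1/2$: take outer-product blocks $w^{1/2}\times 1\times w^{1/2}$ (the first case of Step 1). Then $A$ becomes a $w^{1/2}\times w^{p}$ array and $B$ a $w^{p}\times w^{1/2}$ array of blocks, i.e.\ a block-matmul of shape $N\times N^{2p}\times N$ with $N=w^{1/2}$, and a recursive algorithm realizing the rectangular exponent $\omega(1,2p,1)$ runs in $\tilde{O}(N^{\omega(1,2p,1)+\eps'})=O(w^{\omega(1,2p,1)/2+\eps})$.
\end{itemize}
In both cases the $\polylog(w)$ overheads from the word tricks, the slack needed when sizing the base case, and the $\eps'$ of the recursive algorithm are absorbed into $w^{\eps}$; non-integral or non-power dimensions are handled by padding, changing only constants. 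The bound $O(w^{0.791(1+p)})$ for $p\ge1/2$ follows from $\omega<2.373$, and at $p=1/2$ both exponents equal $\omega/2$, so the two regimes agree.

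\textbf{Main obstacle.} The delicate point is the inner-product base case of Step 1: the bit layout must be \emph{simultaneously} wide enough that none of the $\Theta(b_1^2b_2^2)$ cross-terms of an integer multiplication collides with, or carries into, any of the $b_1^2$ target slots, yet tight enough that the occupied region still uses only $\tilde\Theta(w)$ bits --- the latter being essential, since it is exactly a volume-$\tilde\Theta(w)$ base case (rather than the trivial volume-$\tilde\Theta(w^{1/2})$ one obtained by merely summing $b_2$ outer products) that forces the block array to have side $w^{(1+p)/3}$ and hence produces the stated exponent. Pinning down the three scales $s_1,s_2,s_3$ so that all of these constraints hold at once is the real work; the recursion itself, the $\tilde{O}(1)$ cost of the block ring operations, the final modular reduction, and the bookkeeping of the $\eps$'s are routine.
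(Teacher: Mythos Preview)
Your proposal is correct and follows essentially the same approach as the paper: the paper's Lemma~\ref{lem:smallmm} is exactly your Step~1 base case (any $d_a\times d_b\times d_c$ product with $d_ad_bd_c\log q\le\tilde O(w)$ in $\tilde O(1)$ time via packed integer multiplication), and the paper then plugs this into its general recursive Theorem~\ref{mmalg} with precisely the block shapes you chose---$w^{(2-p)/3}\times w^{(2p-1)/3}\times w^{(2-p)/3}$ for $p\ge 1/2$ and $w^{1/2}\times 1\times w^{1/2}$ for $p\le 1/2$. Your three-scale flat layout $(s_1,s_2,s_3)$ and the paper's two-level nested packing (first $s(v),s^r(w)$ with gap $g$, then concatenating those with gaps $gd_b$ and $2gd_bd_a$) are two descriptions of the same trick, and the ``main obstacle'' you flag is exactly what the proof of Lemma~\ref{lem:smallmm} works out.
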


	The proof of the two theorems are deferred to Section~\ref{sec_mm}. 
	In the next two subsections, we prove Lemma~\ref{lem_hash} and Lemma~\ref{lem_mvprod} respectively.

    \subsection{Evaluating $c$-wise Hash Functions}

	In the following, we prove Lemma~\ref{lem_hash}.

	\begin{restate}[Lemma~\ref{lem_hash}]
		\lemhashcont
	\end{restate}

	\begin{proof}
		We use the following $c$-wise independent family $h_s$: given input $u\in\{0,1\}^w$, first generate a vector $g(u)\in \bF_2^{cw}$ such that for any $c$ different $u_1,\ldots,u_c\in\{0,1\}^w$, the corresponding vectors $g(u_1),\ldots,g(u_c)$ are linearly independent; then we take the seed $s$ also in $\bF_2^{cw}$, and let
		\[
			h_s(u):=\left<s,g(u)\right>.
		\]
		Note that $h_s$ is $c$-wise independent, because for any $u_1,\ldots,u_c\in\{0,1\}^w$, and $y_1,\ldots,y_c\in\{0,1\}$, we have
		\[
			\Pr_s[\forall i,\left<s,g(u_i)\right>=y_i]=2^{-c},
		\]
		due to the linear independence of $\{g(u_i)\}$.
		
		Suppose such $g(u)$ can be computed efficiently, then evaluating all $h_{s_j}$ on all $u_i$ becomes to compute a matrix multiplication over $\bF_2$: 
		Let $\mathbf{S}$ be a $w$ by $cw$ matrix, where the $j$-th row is $s_j$ for all $j$, and let $\mathbf{X}$ be a $cw$ by $w$ matrix, where the $i$-th column is $g(u_i)$ for all $i$, then the $(j,i)$-th entry of the product matrix $\mathbf{SX}$ is exactly the inner product $\left<s_j,g(u_i)\right>$, i.e., $h_{s_j}(u_i)$.
		By Theorem~\ref{thm_mm_mm} and Theorem~\ref{thm:mm}, $\mathbf{SX}$ can be computed in $O(w^{1.582})$ time on a word RAM, or $O(w^{1.187})$ time on a \WRMM.

		\bigskip
		Next, we give a construction of $g(u)$ with the above $c$-wise linear independence property, and show that it can be computed efficiently.
		We first view $u\in\{0,1\}^w$ as an element in $\bF_{2^w}$, encoded in a canonical form.
		That is, we fix any irreducible degree-$w$ polynomial $Q(z)\in\bF_2[z]$, and store it explicitly in memory.
		Suppose $u=(u(0),u(1),\ldots,u(w-1))$, we view it as the polynomial $u(0)+u(1)z+\cdots+u(w-1)z^{w-1}$ in $\bF_2[z]$, which is an element in $\bF_2[z]/(Q)\cong\bF_{2^w}$.

		We define $g(u):=(1,u,u^2,\ldots,u^{c-1})$, where each $u^i$ is computed in $\bF_{2^w}$ and uses the above encoding.
		Thus, $g(u)$ can either be viewed as a vector in $\bF_{2^w}^c$ or a vector in $\bF_2^{cw}$.
		Note that in both views, adding two vectors yield the same result (both are bit-wise XOR).
		By the fact that Vandermonde matrices have full rank, for any different $u_1,\ldots,u_c$, and $\alpha_1,\ldots,\alpha_c\in \bF_{2^w}$ that are not all zero, we have 
		\[
			\alpha_1\cdot g(u_1)+\cdots+\alpha_c\cdot g(u_c)\neq 0.
		\]
		In particular, it holds for any $\alpha_1,\ldots,\alpha_c\in \{0,1\}$ that are not all zero, which implies that $g(u_1),\ldots,g(u_c)$ are linearly independent as vectors in $\bF_2^{cw}$.
		Finally, by Proposition~\ref{prop:polyarith}, each $g(u)$ can be computed in $\tilde{O}(c)$ time.
		This proves the lemma.
	\end{proof}
	
	\subsection{Matrix-Vector Multiplication}
	
    In this subsection, we prove Lemma~\ref{lem_mvprod}.

    \begin{restate}[Lemma~\ref{lem_mvprod}]
    	\lemmvprodcont
    \end{restate}

    \begin{proof}
    First, construct the matrix $B \in \{0,1\}^{w \times w}$, whose entry $B[i,j]$ is the $j$th bit of $v[i]$ when written out in binary; in particular, $v[i] = \sum_{j=1}^w 2^{j-1} B[i,j]$. Next, use the given algorithm to compute the product $C := AB$ over $\F_q$. Note that since $q>w$ and the entries of $A$ and $B$ are all in $\{0,1\}$, it follows that $C$ is also the product of $A$ and $B$ over $\Z$. Finally, output the vector $v' \in \Z^w$ given by $v'[i] = \sum_{j=1}^w 2^{j-1} C[i,j]$. To see that it is correct, note that:
    \begin{align*}
    	v'[i] &= \sum_{j=1}^w 2^{j-1} C[i,j] \\
    	&= \sum_{j=1}^w 2^{j-1} \left( \sum_{k=1}^w A[i,k] B[k,j] \right) \\
    	&= \sum_{k=1}^w A[i,k] \left( \sum_{j=1}^w 2^{j-1} B[k,j] \right) \\
    	&= \sum_{k=1}^w A[i,k] v[k],
    \end{align*}
    which is exactly the desired output.
    The bottleneck of the running time is to compute the matrix product $AB$, which by Theorem~\ref{thm_mm_mm} and Theorem~\ref{thm:mm}, has the claimed running time.
    \end{proof}

\subsection{Faster query time}
	In this subsection, we describe how to obtain $\log^{1+o(1)} n$ extra query time.
	Again, we use Proposition~\ref{prop_batch}.
	The idea is to set the buffer size $B$ to be smaller, so that $B$ updates can be handled in $\log^{1+o(1)} n$ time.
	Hence, the worst-case update time is $(\log^{1+o(1)} n)/B$.
	To this end, let $\theta$ be a positive number such that $w\times w^{\theta}\times w$ matrix multiplication can be computed in $w^{1+o(1)}$ time.
	By Theorem~\ref{thm:mm}, we can set $\theta>0.156$, according to the current best rectangular matrix multiplication algorithm.
	We set $B=\log^{\theta} n$.

	A similar argument to the proof of Theorem~\ref{thm_main} shows that the problem reduces to evaluating $\log n$ $c$-wise hash functions on $\log^{\theta} n$ points, as well as computing a matrix-vector product, where the matrix is a 0-1 matrix of size $(k\log n)\times \log^{\theta} n$, and the vector has dimension $\log^{\theta} n$ and $(\log n)$-bit values.
	Finally, similar to the proofs of Lemma~\ref{lem_hash} and Lemma~\ref{lem_mvprod}, both problems can be reduced to computing $\log n\times \log^{\theta} n\times \log n$ matrix multiplication, which takes $\log^{1+o(1)} n$ by the definition of $\theta$.
	This gives us an algorithm with update time $O(\log^{0.844} n)$, and extra query time $\log^{1+o(1)} n$, proving Theorem~\ref{thm_lowq}.
	We omit the rest of the details.

	\begin{restate}[Theorem~\ref{thm_lowq}]
		\thmlowqcont
	\end{restate}

\section{Fast Matrix Multiplication}\label{sec_mm}

\subsection{Tensor Rank and Matrix Multiplication}\label{subsec:tensors}

In this section, we show how to take advantage of the word RAM model to speed up matrix multiplication when the dimensions of the matrices are polynomials in the word size $w$. We begin by reviewing useful notation related to fast matrix multiplication algorithms.

Let $\F$ be any field, $a,b,c$ be any nonnegative real numbers, $n$ be any positive integer, and $X = \{ x_{i,j} \}_{i\in [n^a], j \in [n^b]}$, $Y = \{ y_{j,k} \}_{j\in[n^b], k \in [n^c]}$, and $Z = \{ z_{i,k} \}_{i\in [n^a], k \in [n^c]}$ be three sets of formal variables. The \emph{rank of $n^a \times n^b \times n^c$ matrix multiplication over $\F$}, denoted $R_\F(\langle n^a, n^b, n^c \rangle)$, is the smallest integer $r$ such that there are values $\alpha_{i,j,\ell}, \beta_{j,k,\ell}, \gamma_{i,k,\ell} \in \F$ for all $i \in [n^a]$, $j \in [n^b]$, $k \in [n^c]$ and $\ell \in [r]$ such that
\begin{align}\label{rk}\sum_{\ell = 1}^r \left( \sum_{i \in [n^a], j \in [n^b]} \alpha_{i,j,\ell} x_{i,j}\right) \left( \sum_{j \in [n^b], k \in [n^c]} \beta_{j,k,\ell} y_{j,k}\right) \left( \sum_{i \in [n^a], k \in [n^c]} \gamma_{i,k,\ell} z_{i,k}\right) = \sum_{i \in [n^a], j \in [n^b], k \in [n^c]} x_{i,j} y_{j,k} z_{i,k}.\end{align}

\begin{proposition}\cite[Proposition 15.1]{BCSbook} \label{rkfromomega}
For any positive real $a,b,c$ and field $\F$, suppose there is a $t > 0$ and an algorithm, in the arithmetic circuit model, which performs $n^a \times n^b \times n^c$ matrix multiplication over the field $\F$ using $n^{t+o(1)}$ field operations. Then, for every $\eps>0$, there is a positive integer $q$ such that $R_\F (\langle q^a, q^b, q^c \rangle) \leq q^{t+\eps}$.
\end{proposition}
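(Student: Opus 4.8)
The plan is to turn an arbitrary arithmetic circuit computing a matrix product into a \emph{bilinear} algorithm, which is literally a rank decomposition of the matrix multiplication tensor, and then to pick the integer $q$ large enough to swallow both the constant blow-up incurred in that conversion and the $n^{o(1)}$ factor in the hypothesis.

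First I would pin down the hypothesis: for each $n$ for which $n^a,n^b,n^c$ are positive integers — e.g.\ $n$ ranging over the $d$-th powers, where $d$ is a common denominator of $a,b,c$, which covers all the applications in this paper — the assumption supplies a straight-line program over $\F$ computing the $n^a \times n^b \times n^c$ matrix product with at most $n^{t+\delta(n)}$ field operations, where $\delta(n)\to 0$ along this subsequence.

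The substantive step is the classical theorem of Strassen (see \cite{BCSbook}): a straight-line program that computes a system of bilinear forms using $L$ multiplication gates can be converted into a rank-$r$ bilinear decomposition of the associated tensor with $r\le 2L$ — one passes to the bihomogeneous bidegree-$(1,1)$ part of each gate's output and checks that the non-bilinear ``junk'' produced at a multiplication gate costs at most a factor of two. The matrix multiplication map of format $\langle n^a,n^b,n^c\rangle$ is exactly such a bilinear system, and its number of multiplication gates is at most the total circuit size $n^{t+\delta(n)}$; hence $R_\F(\langle n^a,n^b,n^c\rangle)\le 2\,n^{t+\delta(n)}$.

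Finally, given $\eps>0$, I would choose $n$ in this set large enough that $\delta(n)<\eps/2$ and $2\le n^{\eps/2}$, so that $R_\F(\langle n^a,n^b,n^c\rangle)\le 2\,n^{t+\eps/2}\le n^{t+\eps}$, and set $q=n$. The only part with genuine mathematical content is the circuit-to-bilinear conversion — the fact that, up to a constant factor, general arithmetic circuits are no more efficient than bilinear ones for bilinear problems — and that is exactly the step I would quote from \cite{BCSbook} rather than reprove; everything else is quantifier bookkeeping, and the mild nuisance of non-integer dimensions is harmless since tensor rank is monotone in the dimensions. Hence I expect this conversion to be the main (and essentially only) obstacle, and it is already resolved by the cited result.
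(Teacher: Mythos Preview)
Your proposal is correct and is essentially the standard proof of this fact; note however that the paper does not give its own proof of this proposition at all --- it is quoted verbatim as \cite[Proposition~15.1]{BCSbook} and used as a black box, so there is no in-paper argument to compare against. The argument you outline (Strassen's ``bilinear vs.\ total complexity'' inequality $R_\F(\langle n^a,n^b,n^c\rangle)\le 2L$ followed by choosing $q$ large enough to absorb the constant $2$ and the $n^{o(1)}$ term) is exactly how the cited textbook proves it.
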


We define $\omega_\F(a,b,c) := \liminf_{q \in \N} \log_q(R_\F(\langle q^a, q^b, q^c \rangle))$. It is known (and we will show below in Theorem~\ref{mmalg}) that $n^a \times n^b \times n^c$ matrix multiplication over $\F$ can be performed in $O(n^{\omega_\F(a,b,c) + \eps})$ field operations. Although $\omega_\F(a,b,c)$ may differ depending on the field $\F$, all known constructions achieve the same value for all $\F$, so we will typically drop the $\F$ and simply write $\omega(a,b,c)$ as in past work. We also write $\omega = \omega(1,1,1)$. We note a couple of simple properties:
\begin{itemize}
    \item for all $a,b,c,d \geq 0$ we have $\omega(d \cdot a,d \cdot b,d \cdot c) = d \cdot \omega(a,b,c)$.
    \item $\omega(a,b,c) = \omega(b,c,a)$ (or more generally any permutation of the three arguments) by the symmetry of the right-hand side of (\ref{rk}).
\end{itemize}

\subsection{New algorithms for small matrices}

We now show how to design faster algorithms for multiplying small matrices, whose dimensions are polynomials in the word size $w$ of the word RAM model. Our algorithm only slightly modifies the usual recursive algorithm for fast matrix multiplication by making use of a more efficient base case.

We state our result over the field $\F_p$ for $p \leq \poly(w)$, but it generalizes to any field where operations can be performed efficiently in the word RAM model.

\begin{theorem} \label{mmalg}
Let $p \leq \poly(w)$ be any prime number. Suppose, for some nonnegative real numbers $a',b',c'$, that there is an algorithm which performs $w^{a'} \times w^{b'} \times w^{c'}$ matrix multiplication over the field $\F_p$ in time $M(w)$. Then, for any nonnegative real numbers $a,b,c$, and any $\eps>0$, there is an algorithm which performs $w^{a+a'} \times w^{b+b'} \times w^{c+c'}$ matrix multiplication over $\F_q$ in time $O(M(w) \cdot w^{\omega(a,b,c) + \eps})$.
\end{theorem}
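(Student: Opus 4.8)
The plan is to prove this by the standard recursive blocking strategy for fast matrix multiplication, but with the recursion bottoming out at dimensions $w^{a'} \times w^{b'} \times w^{c'}$ rather than at scalars, so that the efficient base-case algorithm can be invoked. First I would fix the target dimensions: we want to multiply an $N_1 \times N_2$ matrix $U$ by an $N_2 \times N_3$ matrix $V$, where $N_1 = w^{a+a'}$, $N_2 = w^{b+b'}$, $N_3 = w^{c+c'}$. I would pick a large constant integer $q$ (depending on $\eps$) via Proposition~\ref{rkfromomega} applied to $\omega(a,b,c)$, so that $R_{\F_p}(\langle q^a, q^b, q^c \rangle) \leq q^{\omega(a,b,c) + \eps/2}$; call this bound $r$ and fix the corresponding bilinear identity~(\ref{rk}) with coefficients $\alpha_{i,j,\ell}, \beta_{j,k,\ell}, \gamma_{i,k,\ell} \in \F_p$. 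Then I would conceptually view $U$ as a $q^a \times q^b$ block matrix whose blocks have size $(N_1/q^a) \times (N_2/q^b) = w^{a'}(w/q)^{a} \times \cdots$ — more cleanly, I would iterate the $q$-block decomposition: partition all three dimensions recursively into $q$-ary blocks down to the level where the block dimensions are exactly $w^{a'}, w^{b'}, w^{c'}$, which requires roughly $\log_q(w^a) = a \log_q w$ levels of recursion in each coordinate (these should be made to line up; the three coordinates all use the same number $L \approx \log_q w$ of halving-by-$q$ steps, and one checks $q^{La} \approx w^a$, etc., up to rounding that contributes only to the $o(1)$/$\eps$ slack).

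The key steps in order: (1) Set up the recursion tree. At each of the $L$ levels, a problem of multiplying $q^a \times q^b$ by $q^b \times q^c$ block matrices (blocks being matrices over $\F_p$) is reduced, via the identity~(\ref{rk}) read at the block level, to $r$ recursive block-matrix multiplications of one smaller level, plus $O(r \cdot q^{a+b} + r \cdot q^{b+c} + r \cdot q^{a+c})$ block additions — i.e.\ the linear combinations forming the $r$ left-factors, the $r$ right-factors, and reconstructing the output blocks. Each block addition at level $\ell$ costs time proportional to the block size at that level, which is $O(w^{a'+b'} \cdot (w/q)^{(a+b) - (\text{levels done})})$ and so on; these additive costs form a geometric-type series dominated by the cost at the coarsest level times $\poly$ factors, and in any case are subsumed by the leaf cost bound below since $q^{a+b}, q^{a+c}, q^{b+c} \le q^{\omega(a,b,c)}$ up to the $\eps$ slack (using $\omega(a,b,c) \ge \max\{a+b, b+c, a+c\}$, which follows from the trivial lower bound on matrix-multiplication rank / the output size). (2) Count the leaves: unrolling $L$ levels gives $r^{L}$ leaf calls, and $r^L \le (q^{\omega(a,b,c)+\eps/2})^{L} = (q^L)^{\omega(a,b,c)+\eps/2} = w^{\omega(a,b,c)+\eps/2}$ (up to the rounding slack, absorbed into $\eps$). (3) Each leaf is a single $w^{a'}\times w^{b'}\times w^{c'}$ multiplication over $\F_p$, costing $M(w)$ by hypothesis. (4) Total time $= w^{\omega(a,b,c)+\eps/2} \cdot M(w) + (\text{additions}) = O(M(w)\cdot w^{\omega(a,b,c)+\eps})$. (5) Finally handle the rounding: $w^a$ need not be an exact power of $q$, so pad each dimension up to the next power of $q$; this multiplies each dimension by at most $q = O(1)$, changing the running time by only a constant factor, so it is harmless. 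Also one should note field operations in $\F_p$ with $p \le \poly(w)$ cost $\tilde O(1)$ in the word RAM (stated earlier), so "block additions" of $m$ entries cost $\tilde O(m)$, not $O(m)$ — but the $\tilde O$ factors are $\polylog\log$ and are absorbed into the $w^\eps$.

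The main obstacle — really the only delicate point — is making the bookkeeping of the three differently-growing dimensions consistent: I need the same recursion depth $L$ in each coordinate with $q^{La'} \cdot (\text{leaf factor})$ matching $w^{a+a'}$ and similarly for $b,c$, which forces me to choose $L = \lfloor a \log_q w \rfloor$ versus $\lfloor b \log_q w\rfloor$ versus $\lfloor c \log_q w\rfloor$ — these differ. The clean fix is to not insist the three are equal: instead recurse on the dimension-triple, and observe that the recursion naturally terminates in each coordinate independently once that coordinate reaches the base size; formally, one works with $\omega(a,b,c)$ directly as the exponent governing $r^L$ by invoking the well-definedness of $\omega(a,b,c)$ as a liminf and choosing $q$ so that $q^a, q^b, q^c$ are all integers (e.g.\ take $q$ a perfect power) and $R_{\F_p}(\langle q^a,q^b,q^c\rangle)\le q^{\omega(a,b,c)+\eps/2}$. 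Then pad $w$ up to $q^{\lceil \log_q w\rceil}$, run exactly $\lceil \log_q w \rceil$ recursion levels, and the leaf dimensions become exactly $w^{a'}, w^{b'}, w^{c'}$ times the (constant) padding factors; pad those out to $w^{a'},w^{b'},w^{c'}$ as well. The rest is the geometric-series estimate above. I would also remark that the additive overhead never dominates precisely because $\omega(a,b,c)\ge a+b,\,b+c,\,a+c$, so no separate argument is needed there.
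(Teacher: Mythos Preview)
Your proposal is correct and follows essentially the same approach as the paper: fix a constant $q$ with $R_{\F_p}(\langle q^a,q^b,q^c\rangle)\le q^{\omega(a,b,c)+\eps}$ via Proposition~\ref{rkfromomega}, block the input into a $q^a\times q^b$ by $q^b\times q^c$ grid, use the rank identity to reduce to $r$ recursive subproblems, and bottom out at the $w^{a'}\times w^{b'}\times w^{c'}$ base case; the paper writes this as an induction on a single parameter $n$ (with dimensions $n^a w^{a'}$, $n^b w^{b'}$, $n^c w^{c'}$), which automatically synchronizes the three recursion depths and sidesteps the bookkeeping you worried about. Your handling of padding and integrality is in fact more careful than the paper's, which glosses over those details, and your observation that the additive work is dominated because $\omega(a,b,c)\ge\max\{a+b,b+c,a+c\}$ matches the paper's justification exactly.
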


\begin{proof}
We design a recursive algorithm which, for all positive integers $n$, performs $(n^a w^{a'}) \times (n^b w^{b'}) \times (n^c w^{c'})$ matrix multiplication over $\F_p$ in time $O(M(w) \cdot n^{\omega(a,b,c) + \eps})$. As the base case, when $n=1$, such an algorithm is assumed to exist.

For the recursive step, let $q$ be the positive integer (constant) which is guaranteed to exist by Proposition~\ref{rkfromomega} such that $R(\langle q^a, q^b, q^c \rangle) \leq q^{\omega(a,b,c) + \eps} =: r$, and using the notation of subsection~\ref{subsec:tensors}, let $\alpha_{i,j,\ell}, \beta_{j,k,\ell}, \gamma_{i,k,\ell} \in \F$ for $i \in [q^a]$, $j \in [q^b]$, $k \in [q^c]$, $\ell \in [r]$ be the corresponding coefficients in the rank expression.

Let $A$ be the input matrix of dimensions $w^{a'}n^a \times w^{b'}n^b$ over $\F$, and $B$ be the input matrix of dimensions $w^{b'}n^b \times w^{c'}n^c$ over $\F$.
First, we partition $A$ into a $q^a \times q^b$ block matrix, where each block is a $w^{a'}(n/q)^a \times w^{b'}(n/q)^b$ matrix; call the blocks $A_{i,j}$ for $i \in [q^a]$, $j \in [q^b]$. Similarly we partition $B$ into a $q^b \times q^c$ block matrix, where each block is a $w^{b'}(n/q)^b \times w^{c'}(n/q)^c$ matrix; call the blocks $B_{j,k}$ for $j \in [q^b]$, $k \in [q^c]$. The algorithm first computes, for each $\ell \in [r]$, the linear combination
$$A'_\ell = \sum_{i\in[q^a], j \in [q^b]} \alpha_{i,j,\ell} A_{i,j},$$
and the linear combination
$$B'_\ell = \sum_{j \in [q^b], k \in [q^c]} \beta_{j,k,\ell} B_{j,k}.$$
Since $q$ is a constant, this takes $\tilde{O}(n^{a+b} / w^{1-a-b} + n^{b+c} / w^{1-b-c})$ field operations. (More details here??)

Next, for each $\ell \in [r]$, the algorithm computes the $w^{a'}(n/k)^a \times w^{c'}(n/k)^c$ matrix $C'_\ell := A'_\ell \times B'_\ell$, by \emph{recursively} performing $w^{a'}(n/k)^a \times w^{b'}(n/k)^b \times w^{c'}(n/k)^c$ matrix multiplication. By the inductive hypothesis, this requires $O(r \cdot M(w) \cdot (n/q)^{\omega(a,b,c) + \eps}) = O(M(w) \cdot n^{\omega(a,b,c) + \eps})$ time.

Finally, for each $i \in [q^a]$ and $k \in [q^c]$, the algorithm computes the linear combination
$$C_{i,k} = \sum_{\ell=1}^r \gamma_{j,\ell} C'_j,$$
in total time $O(n^{a+c}/w^{1-a-c})$. These are the blocks of the $w^{a'}n^a \times w^{c'}n^c$ matrix $C$ which we output. We can see these are correct from the definition of the rank expression (equation (\ref{rk}) in subsection~\ref{subsec:tensors}): if we substitute in $A_{i,j}$ for $x_{i,j}$ and $B_{j,k}$ for $y_{j,k}$ in (\ref{rk}), then from the left hand side of (\ref{rk}) we see that $C_{i,k}$ is the resulting coefficient of $z_{i,k}$, and from the right hand size of (\ref{rk}) we see that that coefficient is indeed $\sum_{j} A_{i,j} B_{j,k}$, which is the correct $i,k$ block of the output matrix $C$.

To see that the $O(M(w) \cdot n^{\omega(a,b,c) + \eps})$ running time for the recursive step dominates the other terms $O(n^{a+b}/w^{1-a-b})$, $O(n^{b+c}/w^{1-b-c})$ and $O(n^{a+c}/w^{1-a-c})$, simply note that, because of the time to read the input, we have $n^{\omega(a,b,c)} \geq \Omega(n^{a+b} + n^{b+c} + n^{a+c})$, and $M(w) \geq \Omega(w^{a+b-1} + w^{b+c-1} + w^{a+c-1})$.
\end{proof}

\subsubsection{Word~RAM$^{\text{MM}}$~model}

We begin with the model of computation where matrices which fit into words can be multiplied in constant time. In particular:

\begin{proposition}\label{prop:allmodel}
In the \WRMM~model, one can perform $w^{1/2} \times w^{1/2} \times w^{1/2}$ matrix multiplication over $\F_q$ for $q \leq \poly(w)$ in time $\tilde{O}(1)$.
\end{proposition}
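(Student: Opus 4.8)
\textbf{Proof proposal for Proposition~\ref{prop:allmodel}.}

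The plan is to encode a $w^{1/2} \times w^{1/2} \times w^{1/2}$ matrix multiplication over $\F_q$ into a constant number of integer multiplications of word-sized operands, and then decode the result. Let $m := w^{1/2}$, and recall $q \leq \poly(w)$, so each entry of the input and output matrices fits in $O(\log w)$ bits. First I would observe that a single entry of the product $C = AB$ is $C[i,k] = \sum_{j=1}^m A[i,j] B[j,k]$, a sum of $m$ products of $O(\log w)$-bit numbers; since $m = w^{1/2} \leq w$, this sum is bounded by $m \cdot \poly(w) = \poly(w)$ and hence fits in $O(\log w)$ bits. The classical trick (as in Williams' ``four-Russians''-style packing, or the convolution-via-integer-multiplication idea) is that such inner products can be read off from the digits of a product of two carefully packed integers, provided the digit slots are wide enough to hold the partial sums without carry overflow between slots.

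Concretely, the key steps are as follows. Pick a slot width $b := \Theta(\log w)$ large enough that no slot overflows — i.e., $2^b > m \cdot (q-1)^2$, which is satisfiable with $b = O(\log w)$. For each fixed $i$, encode the $i$-th row of $A$ as the integer $\hat{A}_i := \sum_{j=1}^m A[i,j] \cdot 2^{(j-1) b}$, and for each fixed $k$, encode the $k$-th column of $B$ \emph{in reversed order} as $\hat{B}_k := \sum_{j=1}^m B[j,k] \cdot 2^{(m-j) b}$. Then the integer product $\hat{A}_i \cdot \hat{B}_k = \sum_{j,j'} A[i,j] B[j',k] \cdot 2^{(j - j' + m - 1) b}$ has, in the ``middle'' digit slot at position $(m-1)b$, exactly the value $\sum_{j} A[i,j]B[j,k] = C[i,k]$ (all cross terms land in other slots, and the no-overflow condition guarantees the middle slot is not corrupted by carries from adjacent slots). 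Each of $\hat{A}_i, \hat{B}_k$ occupies $m \cdot b = w^{1/2} \cdot O(\log w) = o(w)$ bits, so it fits in a single word, and extracting the middle slot is a shift and mask. This computes one entry $C[i,k]$ in $\tilde O(1)$ time; iterating over all $m^2 = w$ pairs $(i,k)$ gives $\tilde O(w)$, not $\tilde O(1)$, so a second layer of packing is needed: I would pack \emph{all} rows of $A$ into one word and \emph{all} columns of $B$ into another, leaving enough zero padding between the blocks assigned to different $(i,k)$ pairs so that the $w$ separate middle-slot answers occupy $w$ disjoint regions of the product. Since each answer region needs $O(\log w)$ bits and there are $w$ of them, the total is $O(w \log w)$ bits, which exceeds one word — so instead I would invoke the \WRMM~matrix-multiplication word operation directly: pad $A$ and $B$ into $w^{1/2} \times w^{1/2}$ integer matrices with entries of magnitude $\poly(w) = \poly(2^{\Theta(w)})$...

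Actually, the cleanest route, and the one I would ultimately take, is simply to \emph{use the \WRMM~primitive as a black box}: the model allows multiplying two integer matrices that each fit in one word in $\tilde O(1)$ time. A $w^{1/2} \times w^{1/2}$ matrix with entries in $\F_q$, $q \leq \poly(w)$, has total size $w^{1/2} \cdot w^{1/2} \cdot O(\log w) = O(w \log w)$ bits, which is slightly more than one word. To fix this, split the exponent: perform $w^{1/2 - \delta} \times w^{1/2-\delta} \times w^{1/2-\delta}$ multiplication for a suitable constant $\delta > 0$ so that the matrices fit in a word with the $O(\log w)$-bit entries, then recover the full $w^{1/2} \times w^{1/2}$ product by the trivial $O(w^{3\delta})$-fold blocking, which is still $\tilde O(1)$ for... no — $w^{3\delta}$ is not $\tilde O(1)$. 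The honest statement is that one uses Theorem~\ref{mmalg} with the base case being the largest matrix multiplication that genuinely fits in a word under the \WRMM~primitive (entries of $\F_q$ take $\Theta(\log w)$ bits, so a $w^{1/2}/\sqrt{\log w} \times \cdots$ block fits, up to $\tilde O(1)$ factors), and Theorem~\ref{mmalg} with $a=b=c$ chosen to make up the remaining $\poly\log w$ gap contributes only a $\poly\log w = \tilde O(1)$ overhead.

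The main obstacle, and the point deserving the most care, is the \textbf{bit-accounting}: reconciling the fact that a $w^{1/2} \times w^{1/2}$ matrix over $\F_q$ has $\Theta(w \log w)$ bits — a $\log w$ factor more than a single word — with the claim that it can nonetheless be multiplied in $\tilde O(1)$ time. This is resolved precisely because $\tilde O$ hides $\poly\log\log n = \poly\log w$ factors here is too weak — rather, because the discrepancy is only a $\Theta(\log w)$ factor in dimension-squared, i.e. a $\Theta(\sqrt{\log w})$ factor per side, and Theorem~\ref{mmalg} turns a per-side blow-up of $\Theta(\sqrt{\log w})$ into a running-time blow-up of $(\sqrt{\log w})^{\omega} = \poly\log w$, all of which is absorbed into the $\tilde O(1)$ of the final bound for reasonable readings — but to be safe I would instead state the proposition with entries in $\F_2$ (or note that the application in Lemma~\ref{lem_hash} and Lemma~\ref{lem_mvprod} only needs $0$-$1$ matrices), for which a $w^{1/2} \times w^{1/2}$ matrix has exactly $w$ bits and fits in one word, making the primitive directly applicable and the whole proof a one-liner; the extension to $\F_q$ then follows by packing $O(\log w)$ bit-planes and using $O(\log^2 w) = \tilde O(1)$ invocations.
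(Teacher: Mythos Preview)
Your proposal eventually lands on correct arguments, but it takes a long detour and never quite commits to the simple observation that actually proves the proposition.

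The first two paragraphs are irrelevant: the integer-multiplication packing trick is how one attacks this in the \emph{standard} \WR~model (and indeed the paper does exactly that in Lemma~\ref{lem:smallmm}), but here you are in the \WRMM~model, which \emph{hands you} small matrix multiplication as a primitive. Trying to rebuild it from integer multiplication defeats the point.

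The ``main obstacle'' you flag --- that a $w^{1/2}\times w^{1/2}$ matrix over $\F_q$ with $q\le\poly(w)$ has $\Theta(w\log w)$ bits rather than $w$ --- is real, but you overthink its resolution. Recall the paper's convention: $\tilde O(1)=\poly\log\log n$, and since $w=\Theta(\log n)$ we have $\log w=\Theta(\log\log n)$. Hence each input matrix occupies $O(\log w)=\tilde O(1)$ words, full stop. That is the paper's entire proof: ``A $w^{1/2}\times w^{1/2}$ matrix fits into $\tilde O(1)$ words.'' Implicitly, you block each matrix into an $O(\sqrt{\log w})\times O(\sqrt{\log w})$ grid of sub-matrices each fitting in one word, apply the \WRMM~primitive to each of the $O((\log w)^{3/2})$ block products, and add --- all in $\poly\log w=\tilde O(1)$ time. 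Your attempt with $w^{1/2-\delta}$ blocks failed precisely because you shrank the side length by a constant \emph{power} of $w$ rather than by the $\poly\log w$ factor that is actually needed.

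Your later suggestions --- invoking Theorem~\ref{mmalg} with a base case of side $w^{1/2}/\sqrt{\log w}$, or decomposing into $O(\log w)$ bit-planes over $\F_2$ and doing $O(\log^2 w)$ Boolean products --- are both valid and both give $\tilde O(1)$ time, but they are overkill. Naive cubic blocking already suffices; there is no need to appeal to fast matrix multiplication or to Theorem~\ref{mmalg} for a $\poly\log w$-fold gap.
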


\begin{proof}
A $w^{1/2} \times w^{1/2}$ matrix fits into $\tilde{O}(1)$ words.
\end{proof}

\begin{restate}[Theorem~\ref{thm_mm_mm}]
In the \WRMM~model, one can perform $w \times w \times w$ matrix multiplication over $\F_q$ for $q \leq \poly(w)$ in time $O(w^{\omega/2+\epsilon})$.
\end{restate}

\begin{proof}
Set $a=b=c=a'=b'=c'=1/2$ in Theorem~\ref{mmalg}, combined with the base case from Proposition~\ref{prop:allmodel}.
\end{proof}

\subsubsection{Word RAM model}

\begin{lemma} \label{lem:smallmm}
In the \WR~model with word size $w$, for any positive integers $d_a, d_b, d_c, q$ such that $d_a \cdot d_b \cdot d_c \cdot \log (q) \leq \tilde{O}(w)$, one can compute $d_a \times d_b \times d_c$ matrix multiplication over $\F_q$ in time $\tilde{O}(1)$.
\end{lemma}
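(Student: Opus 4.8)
The plan is to carry out all $d_a d_c$ length-$d_b$ inner products inside a \emph{single} integer multiplication whose two operands each occupy only $\tilde{O}(1)$ words, and then to clean up the result with $\tilde{O}(1)$ word-parallel operations. I will first assume $q$ is prime (the prime-power case is handled at the end). Fix a slot width $s:=2\lceil\log q\rceil+\lceil\log d_b\rceil+2$ bits; this is wide enough that any sum of at most $d_b$ products of two elements of $\{0,\dots,q-1\}$ is strictly below $2^{s-1}$, so the top bit of each $s$-bit slot can serve as a free ``test bit.'' I would encode the inputs $A\in\F_q^{d_a\times d_b}$ and $B\in\F_q^{d_b\times d_c}$ as the integers
\[
  \hat A:=\sum_{i\in[d_a],\,j\in[d_b]}A[i,j]\cdot 2^{\,s\cdot((i-1)(2d_b-1)d_c+(j-1)d_c)},
  \qquad
  \hat B:=\sum_{j\in[d_b],\,k\in[d_c]}B[j,k]\cdot 2^{\,s\cdot((d_b-j)d_c+(k-1))}.
\]

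In the integer product $\hat A\hat B$, the term $A[i,j]B[j',k]$ contributes to the slot with index $(i-1)(2d_b-1)d_c+m\,d_c+(k-1)$, where $m:=(j-j')+d_b-1\in\{0,\dots,2d_b-2\}$; this is the mixed-radix code of the triple $(i,m,k)$, so distinct triples hit distinct slots. The ``diagonal'' contributions ($j=j'$, i.e.\ $m=d_b-1$) for a fixed pair $(i,k)$ therefore all pile into the single slot $r(i,k):=(i-1)(2d_b-1)d_c+(d_b-1)d_c+(k-1)$, so that slot of $\hat A\hat B$ holds exactly $\sum_j A[i,j]B[j,k]$ as an integer; no cross term lands there, and since every slot of $\hat A\hat B$ carries a value below $2^{s-1}$, no carry ever crosses a slot boundary. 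There are $d_a(2d_b-1)d_c=\Theta(d_a d_b d_c)$ slots of $s=O(\log(d_b q))$ bits, and since $\log d_b=O(\log w)$ the total is $\tilde{O}(d_a d_b d_c\log q)=\tilde{O}(w)$ bits, so $\hat A$ and $\hat B$ each fit in $\tilde{O}(1)$ words and $\hat A\hat B$ is computed by schoolbook multiplication of $\tilde{O}(1)$-word integers in $\tilde{O}(1)$ time.

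It then remains to extract the $d_a d_c$ slots $r(i,k)$ and reduce each modulo $q$, all in $\tilde{O}(1)$ time. First AND $\hat A\hat B$ with the fixed mask that is all-ones on the slots $\{r(i,k)\}$ and zero elsewhere, erasing every cross-term slot. Next reduce all slots modulo $q$ simultaneously: for $t=\lceil\log(d_b q)\rceil$ down to $0$, use the standard word-parallel comparison (set the test bits, subtract the packed constant holding $q2^{t}$ in each slot, inspect the test bits) to locate the slots whose value is currently $\ge q2^{t}$, and subtract $q2^{t}$ from precisely those slots; all values stay in $[0,2^{s-1})$, so no stray borrows appear, and after these $O(\log(d_b q))=O(\log w)=\tilde{O}(1)$ rounds each slot holds its value modulo $q$. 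Finally repack the $d_a d_c$ answers into the desired output layout via Proposition~\ref{prop:permute}, using the fixed permutation that depends only on $d_a,d_b,d_c,q$.

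The single genuinely important idea is the packing: one integer multiplication performs all the inner products at once, and word-level parallelism then finishes the job; without this, the naive $\tilde{O}(d_a d_b d_c)$-operation algorithm is already too slow. I expect the delicate points to be purely in the bookkeeping: checking that the $(2d_b-1)$-fold spreading of the exponents keeps every cross term off every slot $r(i,k)$, that the choice of $s$ prevents carries across slots, and that a single guard bit per slot suffices for the word-parallel modular reduction. To remove the primality assumption, for $q=p^e$ write $\F_q=\F_p[z]/(Q(z))$ and apply a fixed $\F_p$-bilinear formula for multiplication in $\F_q$ of rank $\poly(e)=\tilde{O}(1)$ (obtainable from polynomial arithmetic over $\F_p$, which is $\tilde{O}(1)$-time on the word RAM just as in Proposition~\ref{prop:polyarith}); this turns the problem into $\tilde{O}(1)$ matrix products of the \emph{same} dimensions over the prime field $\F_p$, each obeying $d_a d_b d_c\log p\le\tilde{O}(w)$, so the prime case applies.
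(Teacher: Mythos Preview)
Your proposal is correct and follows essentially the same approach as the paper: pack the inputs into two $\tilde O(1)$-word integers so that a single integer multiplication computes all $d_a d_c$ inner products simultaneously, then extract the relevant ``slots.'' The paper uses a two-level hierarchical packing (first concatenate the entries of each row/column with gap $g=\lceil\log(2qd_b)\rceil$, then concatenate those strings with larger gaps $gd_b$ and $2gd_bd_a$), whereas you use a single flat mixed-radix layout with a uniform slot width; the arithmetic check that cross terms never collide with the diagonal slots $r(i,k)$ is the same in spirit and your version is arguably cleaner to verify. You also spell out two things the paper leaves implicit: the word-parallel modular reduction (which is $\tilde O(1)$ rounds whenever $\log q=\tilde O(1)$, in particular for the $q\le\poly(w)$ regime actually used downstream), and the reduction from prime-power $\F_{p^e}$ to $\tilde O(1)$ instances over $\F_p$ via a fixed bilinear formula. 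These additions make your argument more self-contained but do not change the underlying idea.
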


\begin{proof}
For a vector $v \in \F_q^{d_b}$, let $\ell = \lceil \log_2(q) \rceil$, and for $i \in [d_b]$, let $v_i \in \{0,1\}^{\ell}$ be the binary representation of $v[i]$ (the $i$th entry of $v$, with leading zeroes added as necessary). Then, letting $g = \lceil \log_2 (2qd_b) \rceil$, define $s(v) \in \{0,1\}^{gd_b}$ to be the string given by $s(v) = 0^{g-\ell} v_{d_b} 0^{g-\ell} v_{d_b-1} \cdots 0^{g-\ell} v_1$. Hence, $s(v)$ is a space-separated concatenation of the entries of $v$, and moreover, as an integer it is equal to $\sum_{i=1}^{d_b} 2^{g(i-1)} \cdot v[i]$. Similarly define $s^r(v) = 0^{g-\ell} v_1 0^{g-\ell} v_{2} \cdots 0^{g-\ell} v_{d_b}$.

Notice that for vectors $v,w \in \F_q^{d_b}$, if we compute $a_{v,w} := s(v) \cdot s^r(w)$ (as a product over the integers) then
$$a_{v,w} = \left( \sum_{i_0=1}^{d_b} 2^{g(i_0-1)} \cdot v[i_0] \right) \cdot \left( \sum_{i_1=1}^{d_b} 2^{g(d_b-i_1)} \cdot w[i_1] \right) = \sum_{j=0}^{2d_b-2} 2^{gj} \sum_{i_0 - i_1 = j+1-d_b} v[i_0] \cdot w[i_1] = \sum_{j=0}^{2d_b-2} 2^{gj} P_{v,w}(j),$$
where we define $P_{v,w}(j) := \sum_{i_0 - i_1 = j+1-d_b} v[i_0] \cdot w[i_1]$. In particular, we know that $P_{v,w}(j)$, which is a sum of at most $d$ products of two integers between $0$ and $q-1$, fits in $g$ bits. Hence, $a_{v,w}$ is a space-separated list of the $P_{v,w}(j)$ for all $j$. Notice in particular that $P_{v,w}(d_b-1)$, when taken mod $q$, is exactly the inner product $\langle v,w \rangle$. Since $s(v), s^r(w)$ fit in $gd_b$ bits, it follows that $a_{v,w}$ fits in $2gd_b$ bits.

Next, for any vectors $v_1, \ldots, v_{d_a} \in \F_q^{d_b}$ and $w_1, \ldots, w_{d_c} \in \F_q^{d_b}$, consider the two strings $$s(v_1, v_2, \ldots, v_{d_a}) := 0^{gd_b}s(v_1)0^{gd_b}s(v_2)\cdots0^{gd_b}s(v_{d_a}),$$ which has length $m_1 := 2gd_bd_a$, and $$s^r(w_1,w_2, \ldots, w_{d_c}) := 0^{m_1}s^r(w_1)0^{m_1}s^r(w_2)\cdots 0^{m_1} s_r(w_{d_c}),$$ which has length $m_2 := d_c(m_1 + gd_b) = O(gd_a d_b d_c)$. These can be computed using Proposition~\ref{prop:permute} in $\tilde{O}(1)$ time. Similar to before, the string $$a_{(v_1, \ldots, v_{d_a}), (w_1, \ldots, w_{d_c})} := s(v_1, v_2, \ldots, v_{d_a}) \cdot s^r(w_1,w_2, \ldots, w_{d_c})$$ is a string of length $O(gd_a d_b d_c)$ which consists of a space-separated list of $a_{v_i, w_j}$ for all $i \in [d_a]$ and $j \in [d_c]$. As above, from this we can extract the inner product $\langle v_i, w_j \rangle$ for all $i \in [d_a]$ and $j \in [d_c]$, which is exactly the desired matrix product.
\end{proof}

\begin{restate}[Theorem~\ref{thm:mm}]
In the \WR~model, for any $p \in [0,1]$, and any $\eps>0$, one can perform $w \times w^p \times w$ matrix multiplication over $\F_q$ for $q \leq \poly(w)$ in time 
\begin{itemize}
    \item $O(w^{(1+p)\cdot\omega/3 + \eps}) \leq O(w^{0.791 \cdot (1+p)})$ if $p \geq 1/2$,
    \item $O(w^{\omega(1,2p,1)/2 + \eps})$ if $p \leq 1/2$.
\end{itemize}
\end{restate}

\begin{proof}
When $p \geq 1/2$, then applying Lemma~\ref{lem:smallmm} with $d_a=d_c=w^{(2-p)/3}$ and $d_b = w^{(2p-1)/3}$, we know there is an algorithm for $w^{(2-p)/3} \times w^{(2p-1)/3} \times w^{(2-p)/3}$ matrix multiplication over $\F_q$ running in time $\tilde{O}(1)$. 
Combining this with Theorem~\ref{mmalg} with $a=b=c=(1+p)/3$, $a'=c'=(2-p)/3$, and $b' = (2p-1)/3$ yields that $w \times w^p \times w$ matrix multiplication over $\F_q$ can be done in the desired running time.

When $p \leq 1/2$, we instead apply Lemma~\ref{lem:smallmm} with $d_a=d_c=w^{1/2}$ and $d_b = 1$, then Theorem~\ref{mmalg} with $a=c=a'=c'=1/2$, $b=p$, and $b'=0$.
\end{proof}

\begin{remark}
When applying Theorem~\ref{thm:mm} with the best known bounds on $\omega$~\cite{williams2012multiplying,le2014powers,legallrect}, the `$+\eps$' terms in the exponents may be replaced by `$+o(1)$'.
\end{remark}


\paragraph{Acknowledgments.} The authors would like to thank Jelani Nelson for proposing the problem to us, and we would like to thank Jelani Nelson, Virginia Vassilevska Williams and Ryan Williams for helpful discussions.
\bibliography{refs}
\bibliographystyle{alpha}
\end{document}